\documentclass[submission,copyright,creativecommons]{eptcs}

\usepackage{amsmath}
\usepackage{amssymb}
\usepackage{amsthm}
\usepackage[utf8]{inputenc}
\usepackage[T1]{fontenc}
\usepackage{graphicx}
\usepackage{cite}
\usepackage{latexsym}
\usepackage{float}
\usepackage{caption}

\newtheoremstyle{mytheoremstyle} % name
    {\topsep}                    % Space above
    {\topsep}                    % Space below
    {}                           % Body font
    {}                           % Indent amount
    {\scshape}                   % Theorem head font
    {}                          % Punctuation after theorem head
    {.5em}                       % Space after theorem head
    {}  % Theorem head spec (can be left empty, meaning ‘normal’)

\theoremstyle{mytheoremstyle}
\bibliographystyle{alpha}
\bibliography{hg.bib}

\newtheorem{theorem}{Theorem}
\newtheorem{lemma}[theorem]{Lemma}
\newtheorem{definition}[theorem]{Definition}

\title{Hourglass Automata}
\author{
Yuki Osada, Tim French, Mark Reynolds, and Harry Smallbone \institute{The University of Western Australia.} \email{yuki.osada@research.uwa.edu.au, \{tim.french,mark.reynolds\}@uwa.edu.au, 21306592@student.uwa.edu.au}
}

\begin{document}
\maketitle

\begin{abstract}
In this paper, we define the class of {\it hourglass automata}, which are timed automata with bounded clocks that can be made to progress backwards as well as forwards at a constant rate. We then introduce a new clock update for timed automata that allows hourglass automata to be expressed. This allows us to show that language emptiness remains decidable with this update when the number of clocks is two or less. This is done by showing that we can construct a finite untimed graph using clock regions from any timed automaton that use this new update.
\end{abstract}

%{\bf Keywords:} Model-checking, Timed Automata, Hybrid Automata.

\section{Introduction}
Hybrid systems, as defined by Alur et al.~\cite{achhhnossy95}, consist of finite automata where the transitions between states are guarded by a set of variables that change over time. Timed automata are a subset of hybrid systems that have been studied extensively due to their usefulness in industrial modelling. Timed automata as shown in Alur and Dill~\cite{ad94} model time using a finite set of monotonically increasing real-valued clocks. The clocks can be used in guards in automaton transitions and have a number of operations associated with them. They can be reset to 0 independently of each other. Cassez and Larsen~\cite{cl00} also define a class of timed automata, stopwatch automata, which allows the clocks to be stopped and started, resulting in an expressive power equal to linear hybrid automata. Stopping time generally causes undecidability, but Bérard et al.~\cite{bhs12} show it is possible to introduce a restricted form of stopping time with the interrupt timed automata, where clock levels are used to allow values that a clock can be compared to be determined during execution.
%Another feature of hourglasses that is not brought up in the above problem is the ability to stop clocks by placing the hourglasses on their side.

Alur and Dill prove the emptiness problem for timed automata with integer clock guards to be decidable. By creating a clock region automaton, state reachability is PSPACE-complete, leading to formal method applications such as UPPAAL~\cite{lp97}. UPPAAL is a model-checking tool that uses symbolic representations of integer clocks to verify timed automata models.

The decidability of language emptiness for several types of timed automata updates have been looked at Bouyer et al.~\cite{bdfp04}. They show that the updates $x := c$, $x := y$ are decidable, and $x := x + 1$, $x := y + c$ are decidable for 3 or more clocks only when no diagonal constraints are used. This shows that we can change decidability results by removing diagonal constraints.
Further, Brihaye et al.~\cite{bdgorw11} show that under bounded time horizons, reachability of linear hybrid automata can be improved.

We introduce a new update operation to the clocks $x := c_x - x$ over the bounded time range $[0, c_x]$, which allow clocks to simulate the behaviour of going backwards while actually moving forwards within a bounded time range. The motivation for this operation can be explained simply in terms of the hourglass problem. 

%Motivation

The hourglass problem~\cite{hg_problem} describes a situation of some finite set of hourglasses which must be used in conjunction to measure a period T. The clocks have co-prime maximum times with T, and can increase or decrease in a range $[0, c]$. Once a clock reaches the bounds of its range, it stops progressing until a flip is made. We define a flip to be a change in the clock's rate of change from increasing to decreasing or vice versa. Additionally, clocks may only have comparisons to its range bounds. To solve the problem we have two optimisation concerns: firstly, the total amount of time passed in the system, and secondly, the number of flips used to measure T.

Consider an example of timing an egg. The egg must be boiled for exactly 15 minutes with only a 7- and an 11-hourglass timers. This hourglass problem has a number of different solutions, including for example in the two clock problem solving the Bezout identity $ax - by=1$, where 1 is the greatest common divisor of the co-prime clocks. Applying this solution to the example gives $11x-7y=1$, with solution $x=2$, $y=3$ as the number of flips for each hourglass, leading to a total of 36 minutes with 5 flips to boil the egg.

The hourglass problem is partially
%... we should emphasise that HGA are a genuine extension of timed automata
reducible to a timed automaton. By modelling the problem using UPPAAL without the ability to flip the clocks, we obtain a solution of 22 minutes. In this solution, the 7 and 11 clock are both started at $t=0$: at 7 minutes, the egg is set to boiling with 4 minutes remaining on the 11-clock. The 11-clock is reset at $t=11$ and the egg continues to be boiled for the following 11 minutes to give $4 + 11 = 15$.
However, there is a solution that is even faster than those that can be discovered using timed automata. The optimum solution starts with the egg boiling at $t=0$ for both clocks. At $t=7$, the 7-clock is reset and times 4 minutes until $t=11$. Now we use the new operation to flip the hourglass such that it times from 4 to 0, giving a total time of 15 minutes with 2 hourglass flips. This solution cannot be found using monotonically increasing clocks.
Introducing the new operation has useful applications in UPPAAL and, by extension, industrial model-checking.

%\begin{figure}[htb]
%\centering
%\begin{minipage}{.5\textwidth}
%  \centering
%\includegraphics[height=6cm]{etc/capture1.png}
%  \captionof{figure}{A simple Hourglass template in UPPAAL}
%  \label{fig:eg1}
%\end{minipage}%
%\begin{minipage}{.5\textwidth}
%  \centering
%\includegraphics[height=6cm]{etc/capture2.png}
%  \captionof{figure}{An Egg template in UPPAAL}
%  \label{fig:eg2}
%\end{minipage}
%\end{figure}

\section{Applications}
The hourglass analogy is interesting from a problem solving aspect, but it is also very relevant in the context of industrial applications of hybrid automata. There are some key features of hourglass automata to characterize the problems we are interested in:
\begin{itemize}
\item Hourglasses can represent time only up to a fixed maximum time. Once the sand runs out of the hourglass the state of the clock cannot change\footnote{This is reminiscent of the normalisation process used in \cite{ad94}.}. 
\item Hourglasses cannot be compared directly to one another. Typically there is no way to tell whether one hourglass has more time remaining than another, (other than course estimations based on appearance).
\item Hourglasses can be flipped. Flipping an hourglass results in time running backwards, or an accumulated resource running out.
\end{itemize}
These properties can be found in numerous modelling tasks, particularly those associated with material flow networks. One can imagine that an hourglass represents a silo storing some commodity. A complex process can allow that commodity to accumulate for a time and then to be consumed. Direct applications of this include transportation network with various accumulation points (such as stockpiles or silos at a port). Silos have a fixed maximum, their current content cannot be easily determined, and the time to empty a silo is proportional to (or at least related to) the time taken to fill the silo. We are also interested in modelling more complex systems such as trucks moving minerals about a mine site. Trucks can act as mobile silos, but also a fragment of the process, such as the trucks fuel tank, is analogous to an hourglass. In many of these cases, we would expect the time to fill a silo, or truck, or fuel tank, to be proportional (though not equal) to the time to empty it. In this paper, we only consider clocks which allow time to change at rates of $1$ or $-1$ (i.e.\ silos fill up and empty at the same rate) to establish the theory. In future work we will investigate systems without this restriction. This extension would be similar to the multi-rate automata considered in \cite{henzinger}.

\section{Hourglass Automata}

\subsection{Preliminaries}
Timed automata are finite automata coupled with a finite set of real-valued variables, called {\it clocks}, which all increase at unit rate at all locations.
%A {\it location} is a position in the automata.
Each of these clocks may be reset to zero immediately after a transition, and these transitions can have clock constraints associated to them.
The clock constraints on transitions are called {\it guards}, and unless the constraints in the guard are met, the transition cannot be taken.
Similarly, constraints on locations are called {\it invariants}, and the constraint must be met to stay in that location.
The location invariants and transition guards may only consist of comparisons between clock values and non-negative integer constants.
We restrict it to integer values since any rational constant can be multiplied by a large enough value that they all become integers.

A {\it clock valuation} is a map from clock variables to non-negative real-values.
States in timed automata are written as location and clock valuation pairs, $(s, v)$.
Since there can be an infinite number of clock valuations, we can have an infinite number of states in timed automata.
There are two types of state transitions in a timed automaton:
\begin{itemize}
\item a {\it delay} transition $(s, v) \to_{d} (s, v + t)$ for some $t \ge 0$, where $v + t$ is the clock valuation $v$ with all clock values incremented by $t$.
\item an {\it action} transition $(s, v) \to_{a} (s', v')$ for some action $a$, where $s'$ is a location directly connected to $s$ in the automaton and $v'$ is the valuation $v$ after the transition updates are applied.
\end{itemize}
The only {\it clock update} available in the standard timed automata is the reset operation, $x := 0$.

For every clock $x \in X$, we will let $c_x$ be the largest integer constant that $x$ is compared against in the clock constraints.
Once $x$ exceeds $c_x$, the valuation of $x$ can be considered to be $\infty$ as its value would be indistinguishable from any value above $c_x$.
While the values are still distinguishable, we can split the fractional and integral components as comparisons in clock constraints are all with integers.
So for some $t \in \mathbb{R}^+$, let $fr(t)$ be the fractional component of $t$, and $\lfloor t \rfloor$ be the integral component of $t$.
This gives us $t = \lfloor t \rfloor + fr(t)$.

Hourglass automata are extensions of timed automata that can have clocks that go backwards as well as forwards, and these clocks are bound to the range $[0, c_x]$.
Additionally, the guards and invariants in a hourglass automata are limited to comparing a clock $x$ to its bounds: $0$ and $c_x$.
Comparing clocks or comparing clocks to other constants does not make sense for hourglasses.
Additionally, clocks in an hourglass automata can potentially be stopped to simulate the process of placing an hourglass on its side.

\subsection{Syntax}
We first introduce clocks that behave like hourglasses with clocks that are bound to a range.
\begin{definition}
An {\it hourglass clock} $x$ is a clock that can have a value in $[0, c_x]$, and rate of change in $\{-1, 0, 1\}$.
The value of a clock cannot exceed $c_x$ nor go below $0$, so the clock stops progressing past those points.
%the clock is deactivated, stopping progress, once $x = \sup(r_x)$ or $x = \inf(r_x)$.
\end{definition}

We will now define the main feature of hourglasses, which is their ability to be flipped, and have the clock progress backwards.
\begin{definition}
A {\it flip} of a clock is an operation that multiplies the rate of change of the real-value of the clock by $-1$. All clocks initially have a rate of change of 1, so the flip operation alternates the rate of change between $1$ and $-1$.
%If a clock being flipped is inactivate, then the clock is activated with its original rate of change before being multiplied by $-1$.
\end{definition}

Another feature of hourglasses is that time can be stopped by placing them on their side.
\begin{definition}
Hourglass clocks can be stopped, and then started again. This simulates the process of placing a hourglass on its side, and later placing them back upright again. We will call the operation that switches the clock's state between these two as {\it toggling} the clock.
\end{definition}

The rate of change of a clock can be tracked by extending the definition of a timed automata state to include $d \in D : X \to \{-1, 1\}$ that maps clocks $x \in X$ to a value in the set $\{-1, 1\}$, which encodes the direction time is progressing for a clock. This multiples the number of states by $2^{|X|}$, where $|X|$ is the number of clocks in the system. We can further extend this to record the stopped clocks by mapping clocks to a value in $\{-1, -0, 0, 1\}$, resulting in a $4^{|X|}$ multiplier. The $-0$ is needed to store the direction time should progress when toggled back on. Additionally, flipping a clock when stopped, should swap the direction that time progresses in when the clock is started again.

With the above two definitions, we can now define our hourglass automata.
\begin{definition}
An {\it hourglass automaton} is a 7-tuple $A = (\sum, S, S_i, S_f, X, I, T)$ such that:
\begin{itemize}
\item $\sum$ is a finite set of actions,
\item $S$ is a finite set of locations,
\item $S_i \subset S$ is a set of initial locations,
\item $S_f \subset S$ is a set of final locations,
\item $X$ is a finite set of hourglass clocks,
\item $I : S \to C(X)$ is a mapping from locations to clock constraints (the {\it location invariants}), where a clock constraint $\phi \in C(X)$ maps a clock $x \in X$ to a constraint such that $\phi(x) = v(x) \prec c$ or $\phi(x) = c \prec v(x)$ or $\phi = \phi_1 \land \phi_2$, where $, c \in \{0, c_x\}, \prec \in \{<, \leq\}, \phi_1, \phi_2 \in C(X)$, and $v : X \to \mathbb{R}$ maps clocks to their clock valuation.
%(X \cup x_0) \times (X \cup x_0)
% and $x_0$ is a special clock that always has a value of 0
%is the set of all inequalities of the form $x \mid c$ or
\item $T \subseteq S \times \sum \times C(X) \times 2^X \times 2^X \times S$ is a set of {\it transitions}, where the 6-tuple $\langle s, a, \phi, \mu_{flip}, \mu_{toggle}, s' \rangle$ is a transition from location $s$ to location $s'$ with the label $a$. This transition is enabled when the constraint $\phi$ is met, and taking this transition will flip a set of clocks $\mu_{flip} \subseteq X$, and toggle the progress of time in a set of clocks $\mu_{toggle} \subseteq X$. Note that the standard clock reset $x := 0$ is not available in hourglass automata.
% by putting them on its side or putting them back upright.
% and reset a set of clocks $\lambda \subseteq X$.
\end{itemize}
\end{definition}

\subsection{Timed and Untimed Languages}
%\subsection{Timed Languages, Untimed Languages, and Language Emptiness}
%A timed automaton will have some locations known as the \textit{initial locations}, $S_i$, and the \textit{final locations}, $S_f$.
The \textit{initial states} of an automaton are the elements of $S_i \times v_0$, where $v_0(x) = 0$ for all $x \in X$.
The \textit{final states} of an automaton are states that are in the \textit{final locations}.
A run of a timed automaton is a sequence of {\it timed moves}, which represents a delay transition followed by an action transition, from an initial state.
A finite run is said to be {\it accepting} if the run ends in a final state.
For every finite run, we have a finite {\it timed word} that is a sequence of timed moves that represent the run.
We then say that the {\it timed language} accepted by a timed automaton is the set of timed words that are accepting runs.

The {\it language emptiness} problem is the problem of determining whether the language accepted by an automaton is empty or not.
If there is a finite number of states, then this problem is reduced to a graph problem, where we check whether there exists a path from an initial state node to a final state node.
This makes the problem decidable.
In timed automata, there can be an infinite number of states of the form $(s, v) \in S \times V$, where $|S|$ is finite but $|V|$ can be infinite.
Alur and Dill~\cite{ad90,ad94} proved the decidability of the language emptiness problem on timed automata by showing that they can construct a B\"uchi automaton from a timed automaton, and this B\"uchi automaton accepts exactly the set of {\it untimed words} that are equivalent to the timed words accepted by the timed automaton.
Instead of exact clock valuations, these untimed words use equivalence classes of clock valuations, called {\it clock regions}.

\section{Hourglass-Flip Expressible Update for Timed Automata}
\label{sec:flip}
We showed earlier that the states in a hourglass timed automaton can be represented by the addition of an extra map $d$ to represent each clock's direction (and whether they are stopped or not), so we now introduce a new clock update that allows us to keep a positive time progression like in the standard timed automata.
% and active state
\begin{lemma}
The update $x := c - x$, where $x \in X$ and $c \in \mathbb{Z}_{\le c_x}$, with the reset operation $x := 0$ on timed automata is capable of expressing the flip operation with bounded clocks from the hourglass automata.
\end{lemma}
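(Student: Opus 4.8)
The plan is to simulate an arbitrary hourglass automaton $A$ by a timed automaton $A'$ that uses only the reset $x := 0$ and the new update $x := c - x$ (instantiated with $c = c_x$ for the clock being flipped), so that $A$ and $A'$ accept the same timed language, and hence the same untimed language. Only the flip needs to be encoded for this statement; stopping/toggling lies outside it.

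First I would move the direction map into the control state. As already noted, a state of $A$ is $((s,d),v)$ with $d \colon X \to \{-1,1\}$, so I take the locations of $A'$ to be $S \times \{-1,1\}^X$ and relate the states by
\[
  v'(x) = \begin{cases} v(x) & \text{if } d(x) = 1,\\ c_x - v(x) & \text{if } d(x) = -1. \end{cases}
\]
Intuitively, a clock running backwards from $v(x)$ inside $[0,c_x]$ is represented by a clock running forwards from $c_x - v(x)$. A delay of $t$ in $A$ changes $v(x)$ by $d(x)\cdot t$, clamped to $[0,c_x]$, and a delay of $t$ in $A'$ changes $v'(x)$ by $+t$; these agree under the correspondence, since $c_x - (v(x)-t) = v'(x)+t$ when $d(x) = -1$. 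A value $v'(x) > c_x$ represents an hourglass clock sitting at a bound; because guards and invariants of $A$ only compare a clock to $0$ or $c_x$, all such values are equivalent, exactly as in the Alur--Dill normalisation. Accordingly I translate each atomic constraint of $A$ for $A'$ by substituting $x \mapsto c_x - x$ in every location with $d(x) = -1$ (so, for instance, $0 < x$ becomes $x < c_x$ there), while replacing the constraints that a clock bounded in $[0,c_x]$ always satisfies, namely $x \le c_x$ and $0 \le x$, by $\mathit{true}$, and the one it never satisfies, $c_x < x$, by $\mathit{false}$; this last clean-up is what makes out-of-range values of $v'$ harmless.

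The core step is the flip. A transition of $A$ with flip set $\mu_{flip}$ is simulated, for each $x \in \mu_{flip}$, by the update $x := c_x - x$ and a move to the location with the direction bit for $x$ toggled: if $d(x) = 1$ and $v'(x) = v(x)$ beforehand, the target must represent the value $c_x - v(x)$, which is precisely $c_x - x$, and symmetrically for $d(x) = -1$. This is correct whenever $0 \le v'(x) \le c_x$. The remaining case is $v'(x) > c_x$, i.e.\ the hourglass clock has been sitting clamped at a bound: whether it was clamped at $c_x$ (with $d(x) = 1$) or at $0$ (with $d(x) = -1$), flipping it produces, in $A'$, a clock that restarts from $0$, so there the update must be $x := 0$. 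I therefore split each flipping transition into two copies, one guarded by $x \le c_x$ using $x := c_x - x$ and one guarded by $c_x < x$ using $x := 0$, taking the product of these splits over the clocks in $\mu_{flip}$. This is exactly where the reset is needed alongside $x := c - x$, which is why the statement pairs them.

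Finally I would verify by induction on run length that every run of $A$ is matched, action for action and timestamp for timestamp, by a run of $A'$ through corresponding states, and conversely, so that $A$ and $A'$ have the same timed language, and hence the same untimed language. I expect the main obstacle to be the boundary behaviour: checking that out-of-range values of $v'$ never flip the truth of a translated guard, and that a flip made while the clock is clamped is caught by the $c_x < x$ case and reset rather than driven negative by $x := c_x - x$. The rest is a routine structural translation.
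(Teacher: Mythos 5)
Your proposal takes essentially the same approach as the paper: simulate the flip with $x := c_x - x$ when the clock is in $[0, c_x]$, carry the direction $d(x)$ in the control state and toggle it on every flip, and fall back to the reset $x := 0$ when the clock is clamped beyond $c_x$. Your construction is more explicit than the paper's (building the simulating automaton $A'$, the guard translation $x \mapsto c_x - x$ in $d(x) = -1$ locations, and the transition split on $x \le c_x$ versus $c_x < x$), but the underlying idea and the handling of the boundary case via reset match the paper's proof exactly.
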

\begin{proof}
The hourglass automata only allows comparisons of clocks to the constants 0 and $c_x$, so we only care about how much time there is until the clock value reaches these two end points.\\
Let $0 \le x \le c_x$, then $a_1 = c_x - x$ is the time left before $x = c_x$ when time is moving forwards, and $b_1 = x$ is the time left before $x = 0$ when time is moving backwards.
%There are no other cases since we can determine if we are already at these end points from the additional map in the state representation.

If we now apply the update $x := c_x - x$, let $a_2$, $b_2$ be $a_1$, $b_1$ after we apply this update:\\
$a_2 = c_x - (c_x - x), b_2 = (c_x - x)$\\
$\implies a_2 = x$, $b_2 = c_x - x$\\
$\implies a_2 = b_1$, $b_2 = a_1$\\
$\therefore$ this update swaps the time until an end point is reached, which is exactly what the flip operation does in the hourglass automata.
Also, we remain in the $0 \le x \le c_x$ bound as expected ($c_x - x = c_x - [0, c_x] = [0, c_x]$).

This means we can let $c_x$ be our only bound, and restrict time to progressing forwards.
We also need to update the $d$ map when we flip a clock $x$: $d(x) := -d(x)$.

We do not compare the clock value to anything above $c_x$, so we can just let the clock become greater than $c_x$.
%, but instead, we have a state change when the clock is deactivated at $x = c_x$ by updating $d(x) := sgn(d(x)) \cdot 0$.
When a clock $x$ is at its bounds and stopped progressing, the expected result of a flip on $x$ is to let it start moving again towards the other bound, which would be $c_x$ away.
We can simulate this with a clock reset $x := 0$ and state change $d(x) := -d(x)$, then let the clock progress towards $c_x$.
For this, we just have to check if $v(x) \ge c_x$, and execute the above steps in place of the $x := c_x - x$ update.

$\therefore$ We can represent the hourglass automata's bounded clock and flip operation with timed automata, using the standard clock reset operation and the new update operation.
\end{proof}

We will use timed automata extended with the $x := c_x - x$ update to show that the language emptiness problem on hourglass automata is decidable for two clocks or less.
To prove decidability of the language emptiness problem on the hourglass automata, we will construct a finite untimed automaton that accepts untimed words that are equivalent to the timed words that are accepted by the corresponding timed automaton using the $x := c_x - x$ update.
A finite untimed automata would have a finite number of states, so language emptiness would be decidable.

\subsection{Clock Regions}
In the standard timed automata, a finite untimed automata can be constructed using clock regions, which represent sets of clock valuations.
From the reasoning given by Alur, Courcoubetis, and Dill~\cite{acd90}, we can assume all comparisons will be between clocks and integers.
Thus for every clock $x \in X$, we only care about its current integer value.
The ordering of the fractional components is the other piece of information that is important so we know the order in which the integral components increase.
This allowed the partition of the clock valuation space into equivalence classes.
We will use a similar approach.

The original three constraints for two clock valuations, $v$ and $v'$, to be equivalent ($v \cong v'$) are:
\begin{enumerate}
\item For all $x \in X$ either $v(x) \ge c_x$ and $v'(x) \ge c_x$, or $\lfloor v(x) \rfloor = \lfloor v'(x) \rfloor$
\item For all $x, y \in X$ such that $v(x) \le c_x$ and $v(y) \le c_y$,\\
$fr(v(x)) \le fr(v(y))$ if and only if $fr(v'(x)) \le fr(v'(y))$
\item For all $x \in X$ such that $v(x) \le c_x$, $fr(v(x)) = 0$ if and only if $fr(v'(x)) = 0$
\end{enumerate}
With the new update, $x := c_x - x$, we add one more constraint:
%\begin{enumerate}
%\item[4.] For all $x, y \in X$ such that $v(x) \le c_x$ and $v(y) \le c_y$,\\
%$fr(v(x)) + fr(v(y)) < 1$ if and only if $fr(v'(x)) + fr(v'(y)) < 1$, and\\
%$fr(v(x)) + fr(v(y)) = 1$ if and only if $fr(v'(x)) + fr(v'(y)) = 1$, and\\
%$fr(v(x)) + fr(v(y)) > 1$ if and only if $fr(v'(x)) + fr(v'(y)) > 1$.
%\end{enumerate}
%Alternatively:
\begin{enumerate}
\item[4.] For all $x, y \in X$ such that $v(x) \le c_x$ and $v(y) \le c_y$,\\
$fr(v(x)) + fr(v(y)) \le 1$ if and only if $fr(v'(x)) + fr(v'(y)) \le 1$ and\\
$fr(v(x)) + fr(v(y)) \ge 1$ if and only if $fr(v'(x)) + fr(v'(y)) \ge 1$.
\end{enumerate}
%We also need $c - v(x) \ge 0$ for clocks to maintain a valid value. 

\noindent
\begin{lemma}
\label{lemma:consistentupdate}
The fractional component constraints in 2 is mapped to the fractional component constraints in 4 and vice versa when the update $v'(x) := c_x - v(x)$ is made and the valuation of the clock isn't an integer. This means constraint 4 is required to preserve information when the flip operation is made.
\end{lemma}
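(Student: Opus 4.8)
The plan is to reduce the whole statement to one elementary identity: $fr(c_x - t) = 1 - fr(t)$ for every non-integer $t$. This holds precisely because $c_x \in \mathbb{Z}$ --- writing $t = \lfloor t \rfloor + fr(t)$ with $0 < fr(t) < 1$ gives $c_x - t = (c_x - \lfloor t \rfloor - 1) + (1 - fr(t))$ with $0 < 1 - fr(t) < 1$, so the fractional component of $c_x - t$ is $1 - fr(t)$. I would establish this first; everything after it is a single substitution.

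Next I would fix the flipped clock $x$, apply the update $v'(x) := c_x - v(x)$ to obtain a valuation $w$ (with $w(x) = c_x - v(x)$ and $w(z) = v(z)$ for $z \neq x$), and take an arbitrary second clock $y$ with $v(y) \le c_y$; pairs of clocks neither of which is flipped are untouched, so constraints 2 and 4 are preserved for them trivially. Assuming $v(x) \notin \mathbb{Z}$, the identity gives $fr(w(x)) = 1 - fr(v(x))$ while $fr(w(y)) = fr(v(y))$, and then I substitute: $fr(w(x)) \le fr(w(y))$ is equivalent to $1 - fr(v(x)) \le fr(v(y))$, i.e.\ to $fr(v(x)) + fr(v(y)) \ge 1$; and $fr(w(y)) \le fr(w(x))$ is equivalent to $fr(v(x)) + fr(v(y)) \le 1$. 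Thus the two directions of the ordering constraint 2 read on the updated pair in $w$ are exactly the two directions of the sum constraint 4 read on the original pair in $v$. Running the same computation starting from $fr(w(x)) + fr(w(y))$ compared with $1$ turns constraint 4 on $w$ back into constraint 2 on $v$, which gives the ``vice versa'' (and since the update is an involution, this symmetry is automatic). For completeness I would note that if both clocks of a pair are flipped, the identical algebra sends constraint 2 to constraint 2 and constraint 4 to constraint 4, so no information is lost there either.

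I do not expect a real obstacle: the content is the fractional identity plus a rearrangement of inequalities. Two points need a little care --- (i) the hypothesis that the valuation is not an integer, since for $v(x) \in \mathbb{Z}$ one has $fr(c_x - v(x)) = 0 = fr(v(x))$ rather than $1 - fr(v(x))$, but this is exactly the case already isolated by constraint 3, so it causes no harm; and (ii) that strict versus non-strict inequalities match up, which they do because all the fractional parts involved lie strictly in $(0,1)$ when the clocks are non-integral. Finally, to justify the closing sentence of the statement, I would observe that under constraints 1--3 alone two valuations can agree on every fractional ordering yet disagree on whether $fr(v(x)) + fr(v(y))$ exceeds $1$; by the computation above, after flipping $x$ their fractional orderings would then disagree, so the class would fail to be preserved by the update. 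Adding constraint 4 to the definition of $\cong$ removes exactly this ambiguity, which is what the region construction in the next subsection requires.
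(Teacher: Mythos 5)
Your proposal is correct and takes essentially the same route as the paper: both arguments reduce to the identity $fr(c_x - t) = 1 - fr(t)$ for non-integer $t$ (the paper inlines it as $fr(c_x - v(x)) = 1 - fr(v(x) - c_x) = 1 - fr(v(x))$) and then substitute into the ordering and sum inequalities, with the integer case handled separately as a no-op. Your version is slightly tidier --- you isolate the fractional identity up front, invoke the involution property to get the ``vice versa'' for free, and add a short justification of why constraint 4 is actually needed --- but none of this changes the substance of the argument.
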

\begin{proof}
From constraint 2 to constraint 4:

Let $x, y \in X$, $v(x) \le c_x$, $v(x) \notin \mathbb{Z}$, $v(y) \le c_y$, and $fr(v(x)) \le fr(v(y))$.\\
Applying the update $v'(x) := c_x - v(x)$ gives us:\\
$fr(v'(x)) \le fr(v(y))$
$\iff fr(c_x - v(x)) \le fr(v(y))$
$\iff 1 - fr(v(x) - c_x) \le fr(v(y)) \iff$\\
$1 - fr(v(x)) \le fr(v(y))$
$\iff 1 \le fr(v(x)) + fr(v(y))$
$\iff fr(v(x)) + fr(v(y)) \ge 1$\\
$\therefore fr(v(x)) \le fr(v(y)) \to fr(v(x)) + fr(v(y)) \ge 1$

Similarly, $fr(v(y)) \le fr(v(x))$ updates to $fr(v(x)) + fr(v(y)) \le 1$.\\
% and $fr(v(x)) = fr(v(y))$ updates to $fr(v(x)) + fr(v(y)) = 1$.

\noindent
From constraint 4 to constraint 2:

Let $x, y \in X$ such that $v(x) \le c_x$, $v(x) \notin \mathbb{Z}$, $v(y) \le c_y$, and $fr(v(x)) + fr(v(y)) \le 1$.\\
Applying the update $v'(x) := c_x - v(x)$ gives us:\\
$fr(v'(x)) + fr(v(y)) \le 1$
$\iff fr(c_x - v(x)) + fr(v(y)) \le 1$
$\iff fr(v(y)) \le 1 - fr(c_x - v(x)) \iff$\\
$fr(v(y)) \le fr(v(x) - c_x)$
$\iff fr(v(y)) \le fr(v(x))$\\% (since $fr(c_x) = 0$)
$\therefore fr(v(x)) + fr(v(y)) \le 1 \to fr(v(y)) \le fr(v(x))$

%Let $x, y \in X$ such that $v(x) \le c_x$, $v(y) \le c_y$, and $fr(v(x)) + fr(v(y)) \ge 1$.\\
%Applying the update $v(x) := c_x - v(x)$ gives us:\\
%$fr(c_x - v(x)) + fr(v(y)) \ge 1$\\
%$\iff fr(v(y)) \ge 1 - fr(c_x - v(x))$\\
%$\iff fr(v(y)) \ge fr(v(x) - c_x)$\\
%$\iff fr(v(y)) \ge fr(v(x))$ (since $fr(c_x) = 0$)

Similarly, $fr(v(x)) + fr(v(y)) \ge 1$ updates to $fr(v(x)) \le fr(v(y))$.\\
% and $fr(v(x)) + fr(v(y)) = 1$ updates to $fr(v(x)) = fr(v(y))$

\noindent
Note that if $v(x) \in \mathbb{Z}$, then $fr(c_x - v(x)) = 0$, so the fractional constraints do not change and it is still consistent.
%\begin{itemize}
%\item 
%\item If $v(y) \in \mathbb{Z}$, then $fr(c_x - v(x)) + fr(v(y)) \le 1 \iff fr(v(x)) \ge 0$.
%\item If $v(x), v(y) \in \mathbb{Z}$, then $fr(c_x - v(x)) + fr(v(y)) \le 1 \iff 0 \le 1$.
%\end{itemize}
\end{proof}

With only the original three constraints, $\cong$ is known to be an equivalence relation.
\begin{lemma}
$\cong$ remains an equivalence relation with this new constraint.
\end{lemma}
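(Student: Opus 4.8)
The plan is to use the standard fact that an intersection of equivalence relations is again an equivalence relation. The excerpt already records that the relation induced by constraints 1--3 alone, call it $\cong_0$, is an equivalence relation, so it suffices to show that adding constraint 4 preserves reflexivity, symmetry and transitivity; equivalently, that the relation $\sim_4$ obtained by demanding ``constraints 1 and 4 both hold'' is an equivalence relation, after which $\cong$ equals $\cong_0 \cap \sim_4$ and we are done.

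The one genuine point to nail down is that the universal quantifier appearing in constraint 4 --- ``for all $x,y \in X$ such that $v(x) \le c_x$ and $v(y) \le c_y$'' --- ranges over a set of clock-pairs that depends only on the $\cong$-class of $v$, not on $v$ itself. This is exactly what constraint 1 delivers: if $v \cong v'$ then for every clock $x$ we have $v(x) \ge c_x$ if and only if $v'(x) \ge c_x$, and when $v(x) < c_x$ also $\lfloor v(x)\rfloor = \lfloor v'(x)\rfloor$, so the set of clocks ``still in range'' (hence the set of pairs $(x,y)$ over which constraint 4 quantifies) is the same for $v$ and for $v'$. With this in hand, constraint 4 simply asserts that for every relevant pair $(x,y)$ the number $fr(v(x)) + fr(v(y))$ lies in the same cell of the partition $\{\,[0,1),\,\{1\},\,(1,2]\,\}$ of $[0,2]$ as $fr(v'(x)) + fr(v'(y))$ does; that is, $v$ and $v'$ agree under the ``position relative to $1$'' function for that pair.

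From here the three properties are immediate. Reflexivity: with $v' = v$ every biconditional in constraint 4 has the form $P \iff P$, and constraints 1--3 are reflexive, so $v \cong v$. Symmetry: if $v \cong v'$, the shared-domain observation lets us obtain constraint 4 for $(v',v)$ by swapping the two sides of each biconditional for $(v,v')$, and constraints 1--3 are symmetric, so $v' \cong v$. Transitivity: if $v \cong v'$ and $v' \cong v''$, the quantifier domain is common to all three valuations (chaining the constraint-1 argument through $v'$), and for each relevant pair the three sums $fr(v(x))+fr(v(y))$, $fr(v'(x))+fr(v'(y))$, $fr(v''(x))+fr(v''(y))$ lie in the same cell by composing the two biconditionals; together with transitivity of constraints 1--3 this gives $v \cong v''$.

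I expect the only step that needs care is the domain-invariance observation just described; once it is stated cleanly, constraint 4 is visibly the pullback of an equivalence relation (equality of ``position relative to $1$'') along $v \mapsto (fr(v(x)) + fr(v(y)))_{x,y}$, and everything else is routine. A secondary subtlety worth spelling out is the boundary case $fr(v(x)) = 0$ (i.e.\ $v(x) \in \mathbb{Z}$), where constraint 4 collapses to a statement about $fr(v(y))$ alone and is automatically consistent --- exactly the remark closing Lemma~\ref{lemma:consistentupdate} --- and constraint 3 ensures this integrality is itself $\cong$-invariant, so no gap arises at $v(x) = c_x$.
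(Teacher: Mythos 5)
Your proof is correct and takes essentially the same approach as the paper: both verify reflexivity, symmetry and transitivity directly, with the key step being that constraint 1 (with constraint 3 at the boundary $v(x)=c_x$) makes the set of in-range clocks, and hence the quantifier domain of constraint 4, the same for $\cong$-equivalent valuations, after which the biconditionals in constraint 4 chain trivially. The ``intersection of equivalence relations'' framing in your opening is a cosmetic reorganization rather than a different route, since $\sim_4$ is not well-behaved without constraints 1 and 3 and you end up checking the three properties for the full $\cong$ exactly as the paper does.
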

\begin{proof}We show that the three properties of an equivalence relation are maintained.

\noindent
Reflexive:
This is trivially true.
%Let $v \in V$, and:\\
%$\forall x, y \in X$ such that $v(x) \le c_x$ and $v(y) \le c_y$\\
%$f(v(x)) + fr(v(y)) \le 1 \iff f(v(x)) + fr(v(y)) \le 1$, and\\
%$f(v(x)) + fr(v(y)) \ge 1 \iff f(v(x)) + fr(v(y)) \ge 1$\\
%$\therefore v \cong v$.\\

\noindent
Symmetric:

Let $v, v' \in V$, and
$\forall x, y \in X$ such that $v(x) \le c_x$ and $v(y) \le c_y$.\\
%$f(v(x)) + fr(v(y)) \le 1 \iff f(v'(x)) + fr(v'(y)) \le 1$ and\\
%$f(v(x)) + fr(v(y)) \ge 1 \iff f(v'(x)) + fr(v'(y)) \ge 1$.\\
Either $v(x) = c_x \land v'(x) = c_x$ or $\lfloor v(x) \rfloor = \lfloor v'(x) \rfloor$, so $v'(x) \le c_x$ (similarly for $y$).\\
%$\implies f(v'(x)) + fr(v'(y)) \le 1 \iff f(v(x)) + fr(v(y)) \le 1$,\\
%$f(v'(x)) + fr(v'(y)) \ge 1 \iff f(v(x)) + fr(v(y)) \ge 1$\\
$\therefore v \cong v' \implies v' \cong v$.

\noindent
Transitive:

Let $v, v', v'' \in V$, and:
%\begin{itemize}
%\item[] 
$\forall x, y \in X$ such that $v(x) \le c_x$, $v'(x) \le c_x$, $v(y) \le c_y$ and $v'(y) \le c_y$.\\
%$f(v(x)) + fr(v(y)) \le 1 \iff f(v'(x)) + fr(v'(y)) \le 1$ and\\
%$f(v(x)) + fr(v(y)) \ge 1 \iff f(v'(x)) + fr(v'(y)) \ge 1$.\\
Either $v(x) = c_x \land v'(x) = c_x \land v''(x) = c_x$ or $\lfloor v(x) \rfloor = \lfloor v'(x) \rfloor = \lfloor v''(x) \rfloor$, so $v''(x) \le c_x$ (similarly for $y$).
%\item[] 
%$\forall x, y \in X$ such that $v'(x) \le c_x$ and $v'(y) \le c_y$\\
%$f(v'(x)) + fr(v'(y)) \le 1 \iff f(v''(x)) + fr(v''(y)) \le 1$, and\\
%$f(v'(x)) + fr(v'(y)) \ge 1 \iff f(v''(x)) + fr(v''(y)) \ge 1$
%\end{itemize}
$\implies \forall x, y \in X$ such that $v(x) \le c_x$ and $v(y) \le c_y$\\
$f(v(x)) + fr(v(y)) \le 1 \iff f(v'(x)) + fr(v'(y)) \le 1 \iff f(v''(x)) + fr(v''(y)) \le 1$\\
$f(v(x)) + fr(v(y)) \ge 1 \iff f(v'(x)) + fr(v'(y)) \ge 1 \iff f(v''(x)) + fr(v''(y)) \ge 1$\\
$\therefore v \cong v' \land v' \cong v'' \implies v \cong v''$.

The other constraints are unaffected, so $\cong$ still defines an equivalence relation.
\end{proof}

This means we have a partitioning of all possible clock valuations.
Each equivalence class of $\cong$ is called a {\it region}.

\begin{lemma}
The number of regions induced by $\cong$ is bounded by:\\
$\prod_{x \in X}(2\cdot(c_x + 1)) \cdot \lvert X \rvert ! \cdot 2^{\lvert X \rvert - 1} \cdot (\lvert X \rvert + 1)^{|X|} \cdot (\lvert X \rvert + 1)^{|X|}$
\end{lemma}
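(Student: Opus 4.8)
The plan is to bound the number of $\cong$-classes by a product of independent combinatorial choices: one group of factors records the data already present in the classical Alur--Dill construction (constraints 1--3, specialised to clocks bounded by $c_x$), and the remaining factors record the extra information carried by constraint~4. Formally, I would attach to each class a tuple of combinatorial data that determines the class, and then count how many such tuples are possible; this immediately gives an upper bound on the number of classes.

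Constraints 1 and 3 are per clock: for a clock $x$, knowing $\lfloor v(x)\rfloor$ (or the fact that $v(x)>c_x$) together with the bit ``$fr(v(x))=0$ or not'' is the same as knowing which of the $2(c_x+1)$ cells $\{0\},(0,1),\{1\},(1,2),\dots,\{c_x\},(c_x,\infty)$ contains $v(x)$; this yields the factor $\prod_{x\in X}2(c_x+1)$. Constraint 2 records, among the clocks with $v(x)\le c_x$, only the weak order of the fractional parts $fr(v(x))$, that is an ordered set partition of a set of size at most $|X|$. Since the number of ordered set partitions of an $n$-element set is at most $n!\cdot 2^{n-1}$ (choose a linear order of the $n$ elements and then choose which of the $n-1$ adjacent pairs lie in a common block; this procedure realises every ordered set partition, with multiplicity $\prod_i|B_i|!\ge 1$), constraint 2 contributes the factor $|X|!\cdot 2^{|X|-1}$.

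For constraint 4, fix one class of the coarse relation just described; in particular the sorted order $fr(v(x_1))\le\cdots\le fr(v(x_n))$ of the ``small'' clocks ($n\le|X|$) is fixed, and note that constraint 4 is vacuous for any pair involving a clock whose value exceeds its bound. Writing $f_i=fr(v(x_i))$ and using that the $f_i$ are sorted, the set $\{j:f_i+f_j\le 1\}=\{j:f_j\le 1-f_i\}$ is an initial segment $\{1,\dots,p(i)\}$, and the function $p\colon\{1,\dots,n\}\to\{0,1,\dots,n\}$ determines precisely which pairs have fractional-part sum $\le 1$; symmetrically a function $q\colon\{1,\dots,n\}\to\{0,1,\dots,n\}$ encodes which pairs have fractional-part sum $\ge 1$. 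Two valuations lying in the same coarse class and having the same $p$ and the same $q$ therefore satisfy constraint 4 with each other, so constraint 4 splits each coarse class into at most $(n+1)^n\cdot(n+1)^n\le(|X|+1)^{|X|}\cdot(|X|+1)^{|X|}$ pieces. Multiplying this with the factors above gives the claimed bound.

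The main obstacle I expect is the bookkeeping in the ``injectivity'' direction: one must verify that two valuations agreeing on the whole tuple --- per-clock cell, weak fractional order, and the functions $p,q$ --- genuinely satisfy all four constraints with one another (the first three by the classical Alur--Dill argument, the fourth by the observation above), and take care that the per-clock bit ``$fr=0$?'' is not silently double-counted against the ordered-partition factor (harmless for an upper bound, but worth acknowledging). Beyond that the proof is routine counting; the bound is deliberately loose, the factor $(|X|+1)^{2|X|}$ in particular being a crude over-estimate of the interleavings of the $f_i$ with the values $1-f_j$ that can actually occur.
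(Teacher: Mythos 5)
Your proposal matches the paper's proof in substance: both encode each class by (i) per-clock cells giving $\prod_x 2(c_x+1)$, (ii) the weak order of the fractional parts, which you bound by $|X|!\cdot 2^{|X|-1}$ exactly as the paper does by a permutation plus an adjacency-equality bit vector, and (iii) two threshold functions per clock for the sum constraint, each with at most $(|X|+1)^{|X|}$ values (the paper records the largest index with $fr$-sum $<1$ and the largest with $\le 1$; you record $\le 1$ and $\ge 1$, which carry the same information given the sort order). This is the same decomposition and the same counting, with only cosmetic differences in presentation.
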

\begin{proof}
A region can be described with five arrays:
\begin{enumerate}
\item For each clock $x \in X$, the interval in which $x$ lies. Possible options for some clock $x$ are:\\
$\{[0, 0], (0, 1), [1, 1], ..., (c_x-1, c_x), [c_x, c_x], (c_x, \infty)\}$\\
The number of possible ways to construct this array is $\prod_{x \in X}(2\cdot(c_x + 1))$.
\item A permutation of $X_{v \le c_x}$, $\beta : X \to \{1, 2, ..., \lvert X_{v \le c_x} \rvert\}$, giving the $\le$ ordering of the fractional components of the clocks with valuations that are still distinguishable.\\
The number of possible ways to construct this array is $\lvert X_{v \le c_x} \rvert !$.
\item A boolean array of whether the fractional component of a clock is equal to the fractional component of the succeeding clock in the above permutation.\\
The number of possible ways to construct this array is $2^{\lvert X_{v \le c_x} \rvert - 1}$.
\item For each clock $x \in X_{v \le c_x}$, the greatest integer $i \in \mathbb{N}_{\le \lvert X_{v \le c_x} \rvert}$ such that $fr(v(x)) + fr(v(x_i)) < 1$, where $\beta(x_i) = i$ or $v(x_0) = 0$ if no such $i$ exists.\\
The number of possible ways to construct this array is $(\lvert X_{v \le c_x} \rvert + 1)^{|X_{v \le c_x}|}$.
%$\prod_{x \in X_{v \le c_x}}(\lvert X_{v \le c_x} \rvert + 1)$.
\item For each clock $x \in X_{v \le c_x}$, the greatest integer $i \in \mathbb{N}_{\le \lvert X_{v \le c_x} \rvert}$ such that $fr(v(x)) + fr(v(x_i)) \le 1$, where $\beta(x_i) = i$ or $v(x_0) = 0$ if no such $i$ exists.\\
The number of possible ways to construct this array is $(\lvert X_{v \le c_x} \rvert + 1)^{|X_{v \le c_x}|}$.
%\item A boolean array of whether $fr(v(x)) + fr(v(x_i)) = 1$ for each clock $x \in X_{v \le c_x}$, using the $x_i$ from above.\\
%The number of possible ways to construct this array is $2^{\lvert X_{v \le c_x} \rvert}$.
\end{enumerate}
\begin{lemma}
Every equivalence class of $\cong$ can be represented by some five-tuple $\langle \alpha, \beta, \gamma, \zeta, \eta \rangle$, containing the arrays above.
\end{lemma}
\begin{proof}
Let $v, v' \in V$.

$\alpha$ consists of intervals of the form $[a, a]$ or $(a, a+1)$ for $a \in \mathbb{N}_{< c_x}$, $[c_x, c_x]$, and $(c_x, \infty)$.
\begin{itemize}
\item $\alpha(x) = [a, a] \iff \lfloor v(x) \rfloor = a, \lfloor v'(x) \rfloor = a, fr(v(x)) = 0, fr(v'(x)) = 0$\\
$\iff \lfloor v(x) \rfloor = \lfloor v'(x) \rfloor$, $fr(v(x)) = 0 \land fr(v'(x)) = 0$
\item $\alpha(x) = (a, a) \iff \lfloor v(x) \rfloor = a, \lfloor v'(x) \rfloor = a, fr(v(x)) \neq 0, fr(v'(x)) \neq 0$\\
$\iff \lfloor v(x) \rfloor = \lfloor v'(x) \rfloor$, $fr(v(x)) \neq 0 \land fr(v'(x)) \neq 0$
\item $\alpha(x) = [c_x, c_x] \iff v(x) \ge c_x, v'(x) \ge c_x, fr(v(x)) = 0, fr(v'(x)) = 0$\\
$\iff v(x) \ge c_x \land v'(x) \ge c_x$, $fr(v(x)) = 0 \land fr(v'(x)) = 0$
\item $\alpha(x) = (c_x, \infty) \iff v(x) > c_x, v'(x) > c_x$\\
$\iff v(x) > c_x \land v'(x) > c_x$
\end{itemize}
$\therefore \alpha$ satisfies constraints 1 and 3.
% and is necessary for

$\beta$ and $\gamma$ give us an $\le$ ordering of the fractional components of the clocks
with valuations that are still distinguishable while giving us all sub-strings where the fractional components are equal.
\begin{itemize}
\item $\beta(x) = \beta(y)$\\
$\iff v(x) \le c_x \land v(y) \le c_y \land v'(x) \le c_x \land v'(y) \le c_y \land fr(v(x)) = fr(v(y)) \land fr(v'(x)) = fr(v'(y))$
\item $\beta(x) < \beta(y) \land (\land_{a \in \{a \in X | \beta(x) \le \beta(a) \le \beta(y)\}}(\gamma(a)))$\\
$\iff v(x) \le c_x \land v(y) \le c_y \land v'(x) \le c_x \land v'(y) \le c_y \land fr(v(x)) = fr(v(y)) \land fr(v'(x)) = fr(v'(y))$
\item $\beta(x) < \beta(y) \land \lnot (\land_{a \in \{a \in X | \beta(x) \le \beta(a) \le \beta(y)\}}(\gamma(a))) \iff fr(v(x)) < fr(v(y)), fr(v'(x)) < fr(v'(y))$\\
$\iff v(x) \le c_x \land v(y) \le c_y \land v'(x) \le c_x \land v'(y) \le c_y \land fr(v(x)) < fr(v(y)) \land fr(v'(x)) < fr(v'(y))$
\end{itemize}
$\therefore$ $\beta$ and $\gamma$ satisfy constraint 2.

$\zeta$ and $\eta$ with $\beta$ give us for each clock, the set of clocks that can be added so that sum is less than zero, equal to zero, and greater than zero.
\begin{itemize}
\item $\beta(y) \le \zeta(x) \iff v(x) \le c_x \land v'(x) \le c_x \land v(y) \le c_y \land v'(y) \le c_y \land fr(v(x)) + fr(v(y)) < 1$
\item $\eta(x) < \beta(y) \iff v(x) \le c_x \land v'(x) \le c_x \land v(y) \le c_y \land v'(y) \le c_y \land fr(v(x)) + fr(v(y)) > 1$
\item $\zeta(x) < \beta(y) \le \eta(x) \iff v(x) \le c_x \land v'(x) \le c_x \land v(y) \le c_y \land v'(y) \le c_y \land fr(v(x)) + fr(v(y)) = 1$
\end{itemize}
$\therefore$ $\zeta$ and $\eta$ with $\beta$ satisfy constraint 4.
%\noindent
%$\therefore$ We can represent every region using a five-tuple of finite arrays.
\end{proof}

\noindent
$\therefore$ The number of clock regions is bounded by:\\
$\prod_{x \in X}(2\cdot(c_x + 1)) \cdot \lvert X \rvert ! \cdot 2^{\lvert X \rvert - 1} \cdot (\lvert X \rvert + 1)^{|X|} \cdot (\lvert X \rvert + 1)^{|X|}$
%$\lvert X \rvert ! \cdot 2^{\lvert X \rvert - 1} \cdot \prod_{x \in X}(2.(c_x + 1)) \cdot (\lvert X \rvert + 1)^{|X|}$
\end{proof}

We next look at the elapsing of time, clock constraints, and clock updates.

\begin{lemma}
Let $v_1$ and $v_2$ be two clock valuations, $t$ be a non-negative integer, $\phi$ be a clock constraint, and $\lambda, \mu \subseteq X$ be a set of clocks.
The following properties hold when the number of clocks is two or less:
\begin{enumerate}
\item $v_1 \cong v_2 \land t \in \mathbb{Z}^+ \implies v_1 + t \cong v_2 + t$
\item $v_1 \cong v_2 \implies \forall t_1 \in \mathbb{R}^+ \exists t_2 \in \mathbb{R}^+ [v_1 + t_1 \cong v_2 + t_2]$
\item $v_1 \cong v_2 \implies v_1$ satisfies $\phi \iff v_2$ satisfies $\phi$
\item $v_1 \cong v_2 \implies v_1[\lambda := 0] \cong v_2[\lambda := 0]$
\item $v_1 \cong v_2 \implies v_1[\mu := c_\mu - \mu] \cong v_2[\mu := c_\mu - \mu]$
%As shown earlier, the bound constraints switch to sum constraints and vice versa during this update.
\end{enumerate}
\end{lemma}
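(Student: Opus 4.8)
The plan is to handle the five items separately, since they concern different operations (integer delay, arbitrary delay, guard satisfaction, reset, and the new update), and to reduce four of them to routine case analysis over the five region coordinates $\langle\alpha,\beta,\gamma,\zeta,\eta\rangle$, isolating item~2 as the one that genuinely consumes the ``two clocks or less'' hypothesis.

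\emph{Items 1, 3 and 4.} For item~1 I would observe that adding an integer $t$ leaves every fractional part unchanged and shifts every integral part (hence every interval of $\alpha$) uniformly by $t$, with the $(c_x,\infty)$ slot absorbing anything that reaches or passes $c_x$; since constraints 2, 3 and 4 mention only fractional parts and constraint 1 only the integral part / the $\ge c_x$ status, all four are preserved, and the preserved fractional-part-zeroness decides consistently between the $[c_x,c_x]$ and $(c_x,\infty)$ slots. For item~3 I would note that a guard here is a conjunction of atoms comparing a clock to an integer $\le c_x$, and the truth of such an atom is fixed by $\lfloor v(x)\rfloor$, by whether $fr(v(x)) = 0$, and by whether $v(x) \ge c_x$ — exactly the data recorded by constraints 1 and 3 — so $v_1$ and $v_2$ satisfy the same guards. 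For item~4, after $[\lambda := 0]$ each clock in $\lambda$ has value $0$ (integral part $0$, fractional part $0$, and $\le c_x$) while clocks outside $\lambda$ are untouched; a short case check shows each constraint still holds, the only point needing a word being that a reset clock's zero fractional part makes every ordering atom of constraint~2 and every sum atom of constraint~4 that involves it trivially true for both valuations.

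\emph{Item 5.} I would first dispose of the clocks $x \in \mu$ with $v_1(x) \ge c_x$: by constraint~1 we also have $v_2(x) \ge c_x$, and on such clocks the flip is realised (as discussed in the first lemma of this section) by a reset, so these are covered by item~4. For $x \in \mu$ with $v_1(x) < c_x$ (whence $v_2(x) < c_x$), the update sends $v(x)$ to $c_x - v(x)$, and I would use the identities $fr(c_x - v(x)) = 1 - fr(v(x))$ when $fr(v(x)) \ne 0$ and $fr(c_x - v(x)) = 0$ otherwise, together with $\lfloor c_x - v(x)\rfloor$ equal to $c_x - \lfloor v(x)\rfloor$ or $c_x - \lfloor v(x)\rfloor - 1$ accordingly, to see that constraints 1 and 3 are preserved. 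For constraints 2 and 4 I would invoke Lemma~\ref{lemma:consistentupdate} pairwise: for a pair with exactly one clock flipped the update turns a constraint-2 relationship into a constraint-4 relationship and vice versa, and for a pair with both clocks flipped it reverses each; in every case the post-update relationship is determined by a pre-update relationship that $v_1$ and $v_2$ already share (they satisfy \emph{both} 2 and 4), so the updated valuations again agree on 2 and 4.

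\emph{Item 2, the main obstacle.} This is the delay-bisimulation property and the only place $|X| \le 2$ is needed. I would write $t_1 = n + s$ with $n \in \mathbb{Z}^+$ and $s \in [0,1)$, dispatch $n$ by item~1, and then argue that as $s$ ranges over $[0,1)$ the valuation $v_1 + s$ passes through a finite sequence of regions whose \emph{order} depends only on the region of $v_1$; picking $t_2$ to land in the corresponding region of the identically ordered sequence for $v_2$ finishes the proof. The content is to show this sequence is well defined, i.e.\ that the ``events'' driving region changes — a fractional part reaching the next integer, and, with the new constraint, a pairwise sum $fr(v(x)) + fr(v(y))$ crossing $1$ — occur in a relative order that is a function of $\langle\beta,\gamma,\zeta,\eta\rangle$ alone. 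With a single pair of clocks this is checkable directly: both fractional parts advance at unit rate and wrap at $1$, so the sum advances at rate $2$ and drops by $1$ at each wrap, and one enumerates the finitely many orderings of wrap events and sum-crossing events consistent with a given region. I expect the real difficulty, and the reason the statement is restricted, to be exactly that this fails with three or more clocks: the several pairwise-sum events and the ordering events can be realised in genuinely different relative orders by two valuations in the same region (the region records $fr(v(x)){+}fr(v(y))$ versus $1$ for each pair and the ordering, but not, e.g., $2\,fr(v(x)){-}fr(v(y)){-}fr(v(z))$ versus $1$), so no single $t_2$ need exist and $\cong$ ceases to be a time-abstract bisimulation beyond two clocks.
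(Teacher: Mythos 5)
Your proposal matches the paper's proof in structure and in every key idea: items 1, 3, and 4 are handled by direct case analysis on integral parts, fractional parts, and the $\ge c_x$ status; item 5 is split into the reset case for clocks at or beyond $c_x$ and otherwise dispatched pairwise via Lemma~\ref{lemma:consistentupdate}, exactly as the paper does; and item 2 is reduced to a fractional delay and argued by tracking the fixed cyclic sequence of regions a two-clock valuation traverses, which is precisely the content of the paper's explicit enumeration of four region types and the transitions $1 \to 2 \to 3 \to 4 \to 1$. Your diagnosis of why the argument breaks with three or more clocks --- that pairwise-sum crossings and ordering changes can occur in genuinely different relative orders for two valuations in the same region --- is exactly what the counterexample in Lemma~\ref{lemma:threeclockissue} exhibits.
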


\begin{proof}(1)

%We will also consider the case when clocks are stopped.

\noindent
Let $v_1 \cong v_2, t \in \mathbb{Z}^+$, then:\\
$\forall x, y \in X$
$fr(v_1(x)) + fr(v_1(y)) \le 1 \iff fr(v_2(x)) + fr(v_2(y)) \le 1$ and\\
\hspace*{4em}$fr(v_1(x)) + fr(v_1(y)) \ge 1 \iff fr(v_2(x)) + fr(v_2(y)) \ge 1$\\
$\implies \forall x, y \in X$
$fr(v_1(x)+t) + fr(v_1(y)+t) \le 1 \iff fr(v_2(x)+t) + fr(v_2(y)+t) \le 1$ and\\
\hspace*{6em}$fr(v_1(x)+t) + fr(v_1(y)+t) \ge 1 \iff fr(v_2(x)+t) + fr(v_2(y)+t) \ge 1$\\
$\implies v_1 + t \cong v_2 + t$
%It is clear that stopped clocks will not affect this proof.
\end{proof}

\begin{proof}(2)
%Let $X = \{x_1, x_2, ..., x_n\}.$\\
%Without loss of generality, assume that the clock values are sorted in increasing order.

%We will first look at the case when no clocks are stopped.

\noindent
When $t_1 = 0$:
$t_2 = 0$

\noindent
When $t_1 \in \mathbb{Z}^+$:
$t_2 = t_1$ as above.

\noindent
When $0 < t_1 < 1$:
%$2\cdot(max_{x \in X}\{v_1(x)\} + t_1) < 1$,\\
%$\implies t_2 = 0$.\\
%For each $i, j \in [1, n]$,
%$v_1 \cong v_2 \implies \forall x, y \in X$
%$fr(v_1(x)) + fr(v_1(y)) \le 1 \iff fr(v_2(x)) + fr(v_2(y)) \le 1$ and
%$fr(v_1(x)) + fr(v_1(y)) \ge 1 \iff fr(v_2(x)) + fr(v_2(y)) \ge 1$
The existence of a $t_2$ is not guaranteed when there are more than two clocks (See~\ref{lemma:threeclockissue}).
%Consider the case with three clocks $x, y, z \in X$, where $c_x = 2, c_y = 2, c_z = 2, v_1(x) = 0.4, v_1(y) = 0.4, v_1(z) = 0.8, v_2(x) = 0.1, v_2(y) = 0.1, v_2(z) = 0.95$.
%In this case, we can prove $v_1 \cong v_2$, but if we choose $t_1 = 0.1$, there does not exist a $t_2$ such that $v_1 + t_1 \cong v_2 + t_2$.

\noindent
We limit it to two clocks $X = \{x, y\}$ here, and $v_1 \cong v_2$:
\begin{enumerate}
\item Either $v_1(x) \ge c_x$ and $v_2(x) \ge c_x$, or $\lfloor v_1(x) \rfloor = \lfloor v_2(x) \rfloor$ (similarly for $y$)
%Either $v_1(y) \ge c_y$ and $v_2(y) \ge c_y$, or $\lfloor v_1(y) \rfloor = \lfloor v_2(y) \rfloor$
\item $v_1(x) \le c_x \land v_1(y) \le c_x \implies$
$fr(v_1(x)) \le fr(v_1(y)) \iff fr(v_2(x)) \le fr(v_2(y))$ and\\
\hspace*{11.7em}$fr(v_1(y)) \le fr(v_1(x)) \iff fr(v_2(y)) \le fr(v_2(x))$
\item $v_1(x) \le c_x \implies$
$(fr(v_1(x)) = 0 \iff fr(v_2(x)) = 0)$ (similarly for $y$)
%$v_1(y) \le c_y \implies$
%$(fr(v_1(y)) = 0 \iff fr(v_2(y)) = 0)$
\item $v_1(x) \le c_x \land v_1(y) \le c_y \implies$
$fr(v_1(x)) + fr(v_1(y)) \le 1 \iff fr(v_2(x)) + fr(v_2(y)) \le 1$ and\\
\hspace*{11.7em}$fr(v_1(x)) + fr(v_1(y)) \ge 1 \iff fr(v_2(x)) + fr(v_2(y)) \ge 1$
\end{enumerate}
\begin{itemize}
\item Case $v_1(x) + t_1 > c_x, v_1(y) + t_1 > c_y$:
$t_2 > c_x - v_1(x) \land t_2 > c_y - v_1(y) \implies t_2$ is not bounded from above, and
$t_2 = t_1 + 1$ will result in $v_2(x) + t_2 > c_x\land v_2(y) + t_2 > c_y$.
\item Case $v_1(x) + t_1 > c_x, v_1(y) + t_1 \le c_y$:
$c_x - v_2(x) < t_2 \land t_2 \le c_y - v_2(y)$\\
If $c_y - v_2(y) \ge 1$, then $t_2 = 1$ will work since $c_x - 1 < v_2(x) \le c_x$ must be true.\\
If $0 < c_y - v_2(y) < 1 \implies \lfloor v_2(y) \rfloor = c_y-1 \implies \lfloor v_1(y) \rfloor = c_y-1 \implies$\\
$fr(v_1(y)) < fr(v_1(x)) \implies fr(v_2(y)) < fr(v_2(x)) \implies$\\
$t_2 = 1 - fr(v_2(y))$ will work since $1-fr(v_2(x)) < t_2 \le 1 - fr(v_2(y))$ is the range $t_2$ must be in and this range is not empty.\\
$c_y - v_2(y) = 0$ cannot be true since that would mean $v_2(y) = c_y \iff v_1(1) = c_y \iff t_1 = 0$, which is a contradiction since $0 < t_1 < 1$.
\item Case $v_1(x) + t_1 \le c_x, v_1(y) + t_1 > c_y$: We can find a $t_2$ using the same reasoning as above.
\item Case $v_1(x) + t_1 \le c_x, v_1(y) + t_1 \le c_y, fr(v(y)) \le fr(v(x))$: (The $fr(v(x)) \le fr(v(y))$ case is similar)

\end{itemize}
%such that $v_1 + t_1 \cong v_2 + t_2$ doesn't depend on the new constraint, so it is the same as the standard timed automata.\\
Let's assume neither integral components change when $t_1$ is added, then constraints 1 to 3 only require that $t_2$ doesn't change either integral components as well.\\
%The 4th constraint gives us:\\
%$0 \le v_1(x) + v_1(y) \le 1 \implies 0 < v_1(x) + v_1(y) + 2\cdot(t_1) < 3$\\
%$1 \le v_1(x) + v_1(y) < 2 \implies 1 < v_1(x) + v_1(y) + 2\cdot(t_1) < 4$\\
The regions we get from the fourth constraint:
\begin{enumerate}
\item $fr(v(x)) + fr(v(y)) < 1 \land fr(v(y)) = 0$% \to fr(v(x)) + fr(v(y)) < 1$
\item $fr(v(x)) + fr(v(y)) < 1$% \to fr(v(x)) + fr(v(y)) = 1$
\item $fr(v(x)) + fr(v(y)) = 1 \land fr(v(x)) \neq 0$% \to fr(v(x)) + fr(v(y)) > 1$
\item $fr(v(x)) + fr(v(y)) > 1 \land fr(v(x)) \neq 0$% \to fr(v(x)) + fr(v(y)) < 1 \land fr(v(y)) = 0 \le fr(v(x)) \land \lfloor v(y) \rfloor = c + 1$\\
\end{enumerate}

\noindent
From any point in these regions, you can reach a point in\\the following region:
\begin{enumerate}
\item[$1\to2$] Let $t_2 = \epsilon$, where $0 < \epsilon < 1 - (fr(v(x)) + fr(v(y)))$
\item[$2\to3$] Let $t_2 = 1 - (fr(v(x)) + fr(v(y)))$
\item[$3\to4$] Let $t_2 = \epsilon$, where $\epsilon > 0$, $\lfloor v(x) \rfloor = \lfloor v(x)+\epsilon \rfloor$,\\$\lfloor v(y) \rfloor = \lfloor v(y)+\epsilon \rfloor$
\item[$4\to1$] Let $t_2 = min(1 - fr(v(x)), 1 - fr(v(y)))$
\begin{figure}[htb]
\vspace{-20em}
\hfill\begin{minipage}{.42\textwidth}\centering
\includegraphics{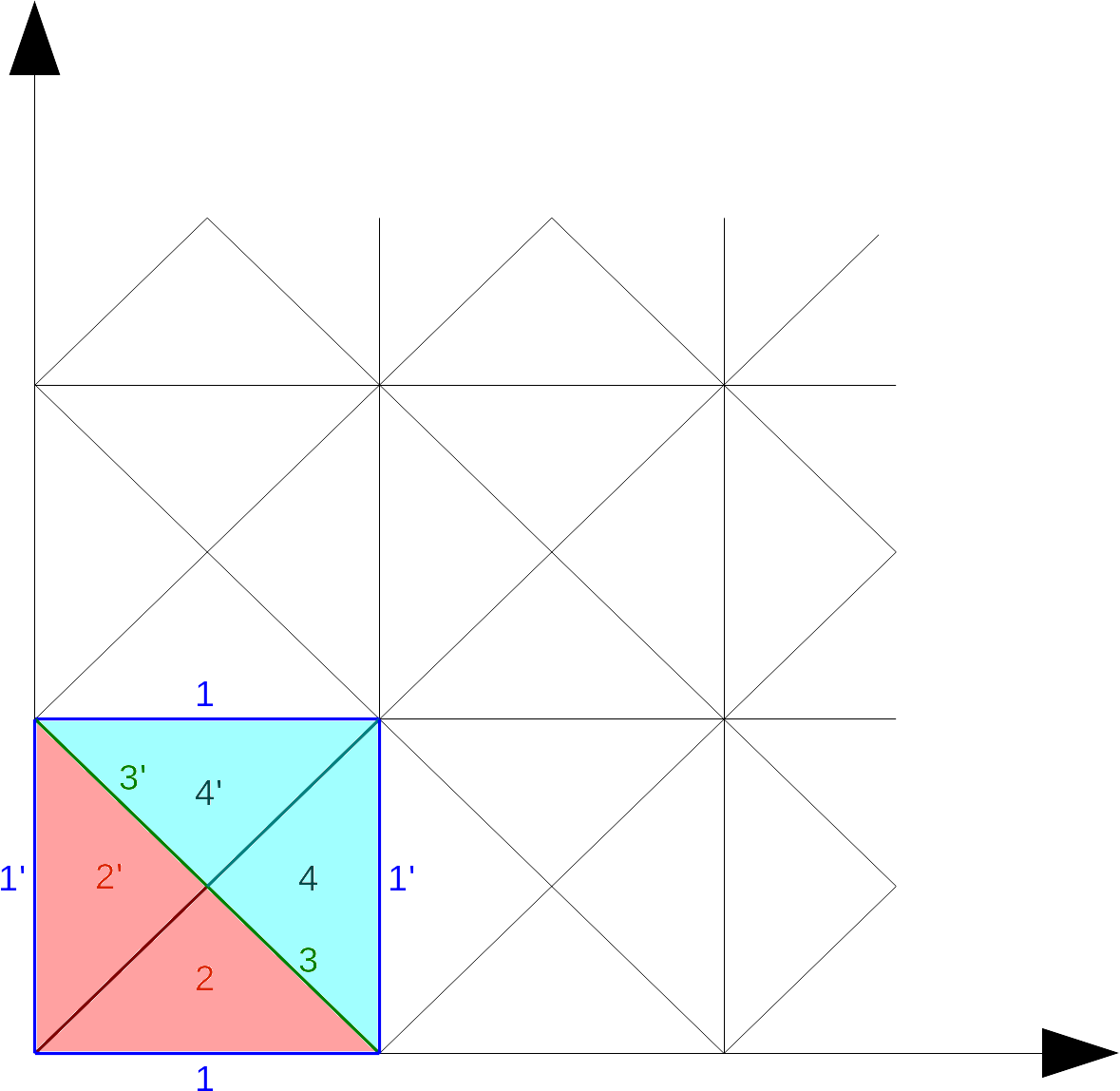}
\caption{\texttt{The four region types}}
\label{fig:regiontypes}
\end{minipage}
\end{figure}
\end{enumerate}

\noindent
After the fourth region, $fr(v(x)) = 0$ and the integral component of x is incremented before starting again at the first region, but with x and y variable positions swapped.
The transition from 4 to 1 also aligns with the integral component increment and fractional component ordering change, so they are taken into account.

%The stopping of clocks here can be made feasible if we make use of the constraints made on the system from the hourglass automata: hourglass clocks are bounded and cannot be compared against each other.

\noindent
When $t_1 > 1$, $t_1 \notin \mathbb{Z}^+$:\\
Let $t_1' = t_1 - \lfloor t_1 \rfloor \implies 0 < t_1' < 1$, and\\
$\lfloor t_1 \rfloor \in \mathbb{Z}^+ \iff v_1 + \lfloor t_1 \rfloor \cong v_2 + \lfloor t_1 \rfloor$, so if we let $v_1' = v_1 + \lfloor t_1 \rfloor, v_2' = v_2 + \lfloor t_1 \rfloor$,\\
then we just need to find $t_2'$, where $v_1' + t_1' \cong v_2' + t_2'$.
\end{proof}

\begin{proof}(3) Let $x, y \in X$

\noindent
$v_1 \cong v_2$ and $v_1$ satisfies $\phi$, where $\phi$ compares clocks to integers:\\
$\forall x \in X, a_x, b_x \in \mathbb{Z}_{\le c_x}^+, a_x \le v(x) \le b_x$\\
$\implies a_x, b_x, a_y, b_y \in \mathbb{Z}_{\le c_x}^+, a_x \le v_1(x) \le b_x \land a_y \le v_1(y) \le b_y \land$\\
$\lfloor v_1(x) \rfloor = \lfloor v_2(x) \rfloor \land \lfloor v_1(y) \rfloor = \lfloor v_2(y) \rfloor \land$\\
$fr(v_1(x)) = 0 \iff fr(v_2(x)) = 0 \land fr(v_1(y)) = 0 \iff fr(v_2(y)) = 0$\\
$\implies a_x \le v_2(x) \le b_x \land a_y \le v_2(y) \le b_y$\\
$\implies v_2$ satisfies $\phi$, and it is unaffected by the new constraint.
%$fr(v_1(x)) + fr(v_1(y)) \le 1 \iff fr(v_2(x)) + fr(v_2(y)) \le 1$,\\
%$fr(v_1(x)) + fr(v_1(y)) \ge 1 \iff fr(v_2(x)) + fr(v_2(y)) \ge 1$
\end{proof}

\begin{proof}(4) Let $x, y \in X, \lambda \subseteq X$

\noindent
$v_1 \cong v_2 \implies$\\
$fr(v_1(x)) + fr(v_1(y)) \le 1 \iff fr(v_2(x)) + fr(v_2(y)) \le 1 \land$\\
$fr(v_1(x)) + fr(v_1(y)) \ge 1 \iff fr(v_2(x)) + fr(v_2(y)) \ge 1$\\

\noindent
$v_1[\lambda := 0], (x \in \lambda \lor y \in \lambda) \implies fr(v_1(x)) + fr(v_1(y)) \le 1$\\
$v_2[\lambda := 0], (x \in \lambda \lor y \in \lambda) \implies fr(v_2(x)) + fr(v_2(y)) \le 1$\\
$\therefore v_1[\lambda := 0], v_2[\lambda := 0], (x \in \lambda \lor y \in \lambda) \implies fr(v_1(x)) + fr(v_1(y)) \le 1, fr(v_2(x)) + fr(v_2(y)) \le 1$\\
$\implies v_1[\lambda := 0] \cong v_2[\lambda := 0]$ since the other constraints are already known to be satisfied.
\end{proof}

%The fifth property is shown to be consistent in terms of constraints 2 and 4 in Lemma~\ref{lemma:consistentupdate}, so we only need to show for the other two constraints.

\begin{proof}(5) Let $x \in \mu, \mu \subseteq X$

If $v(x) \ge c_x$, then the $flip$ operation is converted to a reset operation as explained in Section~\ref{sec:flip}, so we can assume $v(x) < c_x$.

\noindent
$v_1 \cong v_2 \implies \lfloor v_1(x) \rfloor = \lfloor v_2(x) \rfloor \land (fr(v_1(x)) = 0 \iff fr(v_2(x)) = 0)$

\noindent
$fr(v_1(x)) = 0 \implies \lfloor c_x - v_1(x) \rfloor = c_x - v_1(x) = c_x - \lfloor v_1(x) \rfloor, fr(c_x - v_1(x)) = 0$\\
$fr(v_2(x)) = 0 \implies \lfloor c_x - v_2(x) \rfloor = c_x - v_2(x) = c_x - \lfloor v_2(x) \rfloor, fr(c_x - v_2(x)) = 0$\\
$fr(v_1(x)) \ne 0 \implies \lfloor c_x - v_1(x) \rfloor = c_x - \lceil v_1(x) \rceil = c_x - (\lfloor v_1(x) \rfloor + 1), fr(c_x - v_1(x)) \ne 0$\\
$fr(v_2(x)) \ne 0 \implies \lfloor c_x - v_2(x) \rfloor = c_x - \lceil v_2(x) \rceil = c_x - (\lfloor v_2(x) \rfloor + 1), fr(c_x - v_2(x)) \ne 0$\\
$\therefore \lfloor c_x - v_1(x) \rfloor = \lfloor c_x - v_2(x) \rfloor \land (fr(c_x - v_1(x)) = 0 \iff fr(c_x - v_2(x)) = 0)$ is true, so constraints 1 and 3 are maintained after the flip operation.\\

\noindent
Let $x, y \in X, \mu \subseteq X$

%$fr(c_x - v_1(x)) = 1 - fr(v_1(x) - c_x) = 1 - fr(v_1(x))$\\
%$v_2[\lambda := 0] \land x \in \lambda \lor y \in \lambda \implies fr(v_2(x)) + fr(v_2(y)) \le 1$\\
%$\therefore x \in \lambda \lor y \in \lambda \implies fr(v_1(x)) + fr(v_1(y)) \le 1 \land fr(v_2(x)) + fr(v_2(y)) \le 1$\\
%$\iff v_1[\lambda := 0] \cong v_2[\lambda := 0]$ under this update.

\noindent
From Lemma~\ref{lemma:consistentupdate}, we know that if $x \in \mu, y \notin \mu, v(x) \notin \mathbb{Z}$ or $x, y \in \mu, v(x) \notin \mathbb{Z}, v(y) \in \mathbb{Z}$:\\
$fr(v(x)) \le fr(v(y)) \to fr(v(x)) + fr(v(y)) \ge 1$,\\
$fr(v(y)) \le fr(v(x)) \to fr(v(x)) + fr(v(y)) \le 1$,\\
$fr(v(x)) + fr(v(y)) \le 1 \to fr(v(y)) \le fr(v(x))$,\\
$fr(v(x)) + fr(v(y)) \ge 1 \to fr(v(x)) \le fr(v(y))$\\
and $x, y \in \mu, v(x), v(y) \notin \mathbb{Z} \implies$\\
$fr(v(x)) \le fr(v(y)) \to fr(v(x)) + fr(v(y)) \ge 1 \to fr(v(y)) + fr(v(x)) \ge 1 \to fr(v(y)) \le fr(v(x))$,\\
$fr(v(y)) \le fr(v(x)) \to fr(v(x)) + fr(v(y)) \le 1 \to fr(v(y)) + fr(v(x)) \le 1 \to fr(v(x)) \le fr(v(y))$,\\
$fr(v(x)) + fr(v(y)) \le 1 \to fr(v(y)) \le fr(v(x)) \to fr(v(x)) \le fr(v(y)) \to fr(v(x)) + fr(v(y)) \ge 1$,\\
$fr(v(x)) + fr(v(y)) \ge 1 \to fr(v(x)) \le fr(v(y)) \to fr(v(y)) \le fr(v(x)) \to fr(v(x)) + fr(v(y)) \le 1$\\
and $x \in \mu, v(x) \in \mathbb{Z}$ or $x, y \in \mu, v(x), v(y) \in \mathbb{Z} \implies$ no fractional component constraints change with the flip operation.

\noindent
From the above mappings, we can see that all fractional component constraints are being mapped consistently to a fractional component constraint.\\
$\therefore$ constraints 2 and 4 are maintained after the flip operation.
\end{proof}

\begin{lemma}
\label{lemma:threeclockissue}
The progression of time is where this clock update faces undecidability issues when there are more than two clocks. The condition $v_1 \cong v_2 \implies \forall t_1 \in \mathbb{R}^+ \exists t_2 \in \mathbb{R}^+ [v_1 + t_1 \cong v_2 + t_2]$ from above is not met with three clocks.
\end{lemma}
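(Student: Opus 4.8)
The plan is to refute the displayed implication by producing an explicit counterexample with three clocks. Take $X = \{x, y, z\}$ with $c_x = c_y = c_z$ chosen large enough (say at least $2$) that no clock reaches its cap in the small time intervals considered below, and let $v_1, v_2$ be the valuations whose integral parts are all $0$ and whose fractional parts are $(fr(v_1(x)), fr(v_1(y)), fr(v_1(z))) = (1/10, 1/5, 1/2)$ and $(fr(v_2(x)), fr(v_2(y)), fr(v_2(z))) = (1/20, 1/10, 7/10)$. First I would verify $v_1 \cong v_2$: constraint 1 holds since all integral parts agree and lie below $c_x$; constraint 2 holds since both valuations realise the same strict order $fr(x) < fr(y) < fr(z)$; constraint 3 holds since no fractional part equals $0$ in either valuation; and constraint 4 holds since all three pairwise fractional sums are strictly below $1$ in both valuations.

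Next I would take $t_1 = 2/5$, so that $v_1 + t_1$ has all integral parts $0$ and fractional parts $(1/2, 3/5, 9/10)$; now every one of the three pairwise fractional sums exceeds $1$. The heart of the argument is to show that no $t_2 \in \mathbb{R}^+$ satisfies $v_2 + t_2 \cong v_1 + t_1$. Every coordinate of $v_1 + t_1$ lies strictly between $0$ and $1$, so constraint 1 forces every clock of $v_2 + t_2$ to have integral part exactly $0$; since $fr(v_2(z)) = 7/10$ is the largest fractional part of $v_2$, this requires $t_2 < 3/10$. But then the $x$--$y$ pairwise fractional sum of $v_2 + t_2$ equals $3/20 + 2t_2 < 3/4 < 1$, whereas the corresponding sum in $v_1 + t_1$ is $11/10 \ge 1$, contradicting constraint 4. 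Hence no admissible $t_2$ exists and the implication fails for three clocks.

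It is worth recording the conceptual reason, which also explains why two clocks is exactly the threshold. Under the passage of time the order in which the two events ``the $x$--$y$ fractional sum reaches $1$'' and ``the integral part of $z$ increments'' occur is governed by whether $2\,fr(z) - fr(x) - fr(y)$ is below or above $1$, and this linear combination is not determined by constraints 1--4; the two valuations above sit on opposite sides of it, so their time-successor regions diverge and one reaches a region (all pairwise sums above $1$ with no integral part yet incremented) that the other never reaches. With only two clocks there is no third fractional coordinate available to create such a discrepancy, which is precisely why cases (1)--(5) of the preceding lemma were restricted to that setting. The step requiring the most care is ruling out the remaining ways $v_2 + t_2$ might match $v_1 + t_1$ -- with some clock capped at $c_x$, or with all integral parts equal to a common positive integer -- but both are excluded because every coordinate of $v_1 + t_1$ is a proper fraction, so once $c_x$ is chosen large the only surviving possibility is ``all integral parts $0$'', which the computation above then defeats.
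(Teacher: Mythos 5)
Your proof is correct and takes essentially the same route as the paper: exhibit a concrete three-clock pair $v_1 \cong v_2$ and a delay $t_1$ that forces contradictory requirements on any candidate $t_2$ (a small bound from keeping $z$'s integral part fixed versus a large bound from matching the $x$--$y$ fractional-sum side of $1$). The paper uses $v_1 = (0.4,0.4,0.8)$, $v_2 = (0.1,0.1,0.95)$, $t_1 = 0.1$, arriving at $t_2 < 0.05$ versus $t_2 = 0.4$; your numbers differ and use a strict rather than an equality target, but the mechanism is identical.

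One small caution about your choice $fr(v_1(z)) = 1/2$: constraint 4 is stated for ``all $x,y \in X$,'' and if the diagonal case $x=y$ were admitted, you would need $2\,fr(v_1(z)) \le 1 \iff 2\,fr(v_2(z)) \le 1$, which fails here since $2\cdot\tfrac12 = 1 \le 1$ but $2\cdot\tfrac{7}{10} > 1$, and your claimed $v_1 \cong v_2$ would collapse. You are saved only because the paper later (in the stopping-time section) makes clear that condition 4 is intended with $x \ne y$, adding the diagonal version as a separate strengthening. The paper's own example is robust to either reading because both of its $z$-values already have $2\,fr > 1$; nudging your $fr(v_1(z))$ above $1/2$ (e.g.\ to $0.6$, with a correspondingly adjusted $t_1$) would make your example equally robust. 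Your closing conceptual remark about $2\,fr(z) - fr(x) - fr(y)$ not being pinned down by constraints 1--4 is a nice addition that the paper does not spell out.
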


\begin{proof}
We show an example with three clocks:\\
Let $x, y, z \in X$ be the three clocks, and $c_x = 2, c_y = 2, c_z = 2, v_1(x) = 0.4, v_1(y) = 0.4, v_1(z) = 0.8, v_2(x) = 0.1, v_2(y) = 0.1, v_2(z) = 0.95$\\
We show that $v_1 \cong v_2$:
\begin{enumerate}
\item $\lfloor v_1(x) \rfloor = 0, \lfloor v_2(x) \rfloor = 0 \implies \lfloor v_1(x) \rfloor = \lfloor v_2(x) \rfloor$\\
$\lfloor v_1(y) \rfloor = 0, \lfloor v_2(y) \rfloor = 0 \implies \lfloor v_1(y) \rfloor = \lfloor v_2(y) \rfloor$\\
$\lfloor v_1(z) \rfloor = 0, \lfloor v_2(z) \rfloor = 0 \implies \lfloor v_1(z) \rfloor = \lfloor v_2(z) \rfloor$
\item $fr(v_1(x)) = 0.4, fr(v_1(y)) = 0.4, fr(v_2(x)) = 0.1, fr(v_2(y)) = 0.1 \implies$\\
$fr(v_1(x)) \le fr(v_1(y))$ and $fr(v_2(x)) \le fr(v_2(y))$\\
Also true when we swap x and y.\\
$fr(v_1(x)) = 0.4, fr(v_1(z)) = 0.8, fr(v_2(x)) = 0.1, fr(v_2(z)) = 0.95 \implies$\\
$fr(v_1(x)) \le fr(v_1(z))$ and $fr(v_2(x)) \le fr(v_2(z))$\\
When swapped, both constraints fail as expected.\\
$fr(v_1(y)) = 0.4, fr(v_1(z)) = 0.8, fr(v_2(y)) = 0.1, fr(v_2(z)) = 0.95 \implies$\\
$fr(v_1(y)) \le fr(v_1(z))$ and $fr(v_2(x)) \le fr(v_2(z))$\\
When swapped, both constraints fail as expected.
\item None have fractional components of zero.
\item $fr(v_1(x)) = 0.4, fr(v_1(y)) = 0.4, fr(v_2(x)) = 0.1, fr(v_2(y)) = 0.1 \implies$\\
$fr(v_1(x)) + fr(v_1(y)) < 1$ and $fr(v_2(x)) + fr(v_2(y)) < 1$\\
Also true when we swap x and y.\\
$fr(v_1(x)) = 0.4, fr(v_1(z)) = 0.8, fr(v_2(x)) = 0.1, fr(v_2(z)) = 0.95 \implies$\\
$fr(v_1(x)) + fr(v_1(z)) > 1$ and $fr(v_2(x)) + fr(v_2(z)) > 1$\\
Also true when we swap x and z.\\
$fr(v_1(y)) = 0.4, fr(v_1(z)) = 0.8, fr(v_2(y)) = 0.1, fr(v_2(z)) = 0.95 \implies$\\
$fr(v_1(y)) + fr(v_1(z)) > 1$ and $fr(v_2(y)) + fr(v_2(z)) > 1$\\
Also true when we swap y and z.
\end{enumerate}
$\therefore v_1 \cong v_2$,
but if we choose $t_1 = 0.1 \implies (v_1 + t_1)(x) = 0.5, (v_1 + t_1)(y) = 0.5, (v_1 + t_1)(z) = 0.9$:
\begin{enumerate}
\item $\lfloor (v_2+t_2)(x) \rfloor = 0, \lfloor (v_2+t_2)(y) \rfloor = 0, \lfloor (v_2+t_2)(z) \rfloor = 0 \implies t_2 \le 0.9, t_2 \le 0.9, t_2 \le 0.05$\\
$t_2 < 0.05$
\item No fractional ordering will change unless there is an integral component change, so this has the same restriction of $t_2 < 0.05$.
\item $t_2 < 0.05$ for the same reason as above.
\item $fr((v_1 + t_1)(x)) + fr((v_1 + t_1)(y)) = 1 \implies fr((v_2 + t_2)(x)) + fr((v_2 + t_2)(y)) = 1$\\
$fr((v_1 + t_1)(x)) + fr((v_1 + t_1)(z)) > 1 \implies fr((v_2 + t_2)(x)) + fr((v_2 + t_2)(y)) > 1$\\
$fr((v_1 + t_1)(y)) + fr((v_1 + t_1)(z)) > 1 \implies fr((v_2 + t_2)(x)) + fr((v_2 + t_2)(y)) > 1$\\
$\implies t_2 = 0.4, t_2 < 0.475, t_2 < 0.475$
\end{enumerate}
We have two conflicting requirements where $t_2 < 0.05$ but $t_2 = 0.4$.
This means the interval for $z$ and the ordering on the fractional components must change before the $x+y$ sum constraint can be fulfilled.
%$(v_1 + t_1)(x) = 0.5, (v_1 + t_1)(y) = 0.5, (v_1 + t_1)(z) = 0.9 \implies fr((v_1 + t_1)(x)) + fr((v_1 + t_1)(y)) = 1$\\
%$\iff fr((v_2 + t_2)(x)) + fr((v_2 + t_2)(y)) = 1 \implies t_2 = 0.4$, but at $t_2 = 0.05, (v_2 + t_2)(z) = 1$, which .\\

\noindent
$\therefore$ There is no $t_2$ such that $v_1 + t_1 \cong v_2 + t_2$.
\end{proof}

We now extend our equivalence relation $\cong$ over clock valuations to an equivalence relation over the state space of timed automata by taking the cross product of the locations, clock valuations and the rate of change map, and require that equivalent states have the same location, clock region and rate of change map.\\
So let $s, s' \in S$, $v, v' \in V$ and $d, d' \in D$, then $(s, [v], d) \cong (s', [v'], d') \iff s = s' \land [v] = [v'] \land d = d'$.\\
We will call this the {\it region graph}.

\begin{lemma}
%This finite state transition graph will accept exactly the set of words equivalent to the words accepted by the original infinite state transition graph.
%Timed actions on the original timed automata can be simulated on t
For every timed action in the timed automata, there exists a corresponding transition in the region graph.
\end{lemma}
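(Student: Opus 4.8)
The plan is to unpack a timed action into its two constituent steps --- a delay transition followed by an action transition --- and to simulate each step in the region graph using the appropriate clause of the preceding lemma (the one listing properties 1--5 of $\cong$). Fix a timed move $(s, v, d) \to_d (s, v+t, d) \to_a (s'', v'', d'')$, where $t \ge 0$, the chosen transition $\langle s, a, \phi, \mu_{flip}, \mu_{toggle}, s'' \rangle$ is enabled at $v+t$ (i.e.\ $v+t$ satisfies $\phi$), the invariant of $s$ holds throughout the delay and the invariant of $s''$ holds at $v''$, and $v''$ is obtained from $v+t$ by applying the resets and the $x := c_x - x$ updates dictated by $\mu_{flip}$ (using the reset branch of Section~\ref{sec:flip} whenever a flipped clock is at or above its bound), while $d''$ is $d$ with its sign flipped on $\mu_{flip}$ and toggled on $\mu_{toggle}$.

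First I would handle the delay. Using property 1, advance $v$ by the integral part $\lfloor t \rfloor$, which keeps us inside the same finite sequence of regions; then, using property 2 --- which is exactly the place where the restriction to at most two clocks is essential, since Lemma~\ref{lemma:threeclockissue} shows this clause can fail with three --- obtain a finite chain of time-successor regions leading from $[v]$ to $[v+t]$. By definition of the region graph this chain is a sequence of delay edges, so the region graph moves $(s, [v], d)$ to $(s, [v+t], d)$ (the direction map is unchanged by delays); and the invariant of $s$ is satisfied all along the chain because, by property 3, invariant satisfaction depends only on the region, not the representative.

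Next I would handle the action. Since $v+t$ satisfies $\phi$, property 3 gives that every valuation in $[v+t]$ satisfies $\phi$, so the transition $\langle s, a, \phi, \mu_{flip}, \mu_{toggle}, s'' \rangle$ is enabled at the region state $(s, [v+t], d)$. For the updates: property 4 shows that resetting $\lambda$ sends $[v+t]$ to a well-defined region independent of representative, property 5 does the same for the $x := c_x - x$ updates (with the degenerate reset-branch case already absorbed into its proof), and composing these updates yields a region equal to $[v'']$; the passage $d \mapsto d''$ is a deterministic function of $d$ and the transition data, so it transfers verbatim. Hence the region graph carries an action edge $(s, [v+t], d) \to_a (s'', [v''], d'')$, and concatenating it with the delay chain produces the required corresponding transition.

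The main obstacle is the delay step: a single arbitrary real-valued delay must be matched by a finite sequence of discrete region-successor moves, and the existence of such a sequence is precisely property 2, which is false for three or more clocks; thus the two-clock hypothesis is doing real work here. A secondary point needing care is the boundary $v(x) \ge c_x$, where a flip degenerates into a reset together with a direction change: one must verify that the region graph branches on the interval component of the region (which records whether $x$ has exceeded $c_x$) in exactly the same way the timed semantics branches on the valuation, so that the same update is chosen on both sides of the correspondence.
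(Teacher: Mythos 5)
Your proof takes essentially the same approach as the paper: decompose the timed move into a delay transition followed by an action transition, match the delay via property 2 of the preceding lemma (with property 1 handling the integer part, as in the paper's own proof of property 2), and close the action step by invoking properties 3, 4, and 5 for guard satisfaction, resets, and flips respectively. The paper's presentation is slightly more abstract --- it fixes two equivalent valuations $v_1 \cong v_2$, takes a timed move $(s,v_1)\Rightarrow_a(s',v_1')$ as given, and exhibits a matching move $(s,v_2)\Rightarrow_a(s',v_2')$ with $v_1'\cong v_2'$ (a time-abstracting bisimulation argument), whereas you describe the delay step as a finite chain of time-successor region edges --- but this is a difference in exposition, not in substance.
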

\begin{proof}
Let $v_1 \cong v_2 \land (s, v_1) \Rightarrow_{a} (s', v_1')$. For simplicity, we will drop the encoding of the rate of change map since they can also be encoded in $s$.\\
The transition $\langle s, a, \phi, \lambda, \mu, s' \rangle$ that changes the state from $(s, v_1)$ to $(s', v_1')$ corresponds to two transitions of the timed automaton:
\begin{itemize}
\item a delay transition $(s, v_1) \to_{t_1} (s, v_1 + t_1)$ for some $t_1 \ge 0$, followed by
\item an action transition $(s, v_1 + t_1) \to_{a} (s, v_1') : v_1 + t_1$ satisfies $\phi$ and $v_1' = (v_1 + t_1)[\lambda := 0, \mu := c_\mu - \mu]$.
\end{itemize}

\noindent
Delay transition:\\
$v_1 \cong v_2 \implies \forall t_1 \in \mathbb{R}^+ \exists t_2 \in \mathbb{R}^+ [v_1 + t_1 \cong v_2 + t_2]$\\
$\therefore \exists t_2 : (s, v_2) \to_{t_2} (s, v_2 + t_2)$ and $(s, v_1 + t_1) \cong (s, v_2 + t_2)$

\noindent
Action transition:\\
$(v_1 + t_1) \cong (v_2 + t_2) \land (v_1 + t_1)$ satisfies $\phi \implies (v_2 + t_2)$ satisfies $\phi$\\
$(v_1 + t_1) \cong (v_2 + t_2) \implies (v_1 + t_1)[\lambda := 0] \cong (v_2 + t_2)[\lambda := 0]$\\
$(v_1 + t_1)[\lambda := 0] \cong (v_2 + t_2)[\lambda := 0] \implies (v_1 + t_1)[\lambda := 0,\mu := c_\mu - \mu] \cong (v_2 + t_2)[\lambda := 0,\mu := c_\mu - \mu]$\\
$\therefore \exists v_2' = (v_2 + t_2)[\lambda := 0,\mu := c_\mu - \mu] : v_1' \cong v_2' \land (s, v_2) \Rightarrow_{a} (s', v_2')$

\noindent
$\therefore$ We have transitions in the region graph that correspond to transitions of the timed automaton, and the behaviour is consistent.
\end{proof}

Lastly, the initial states in the region graph have the form:\\
\centerline{$(s_i, [v], d)$, where $s_i \in S_i$, and $\forall x \in X, v(x) = 0 \land d(x) = 1$}

The final states in the region graph have the form:\\
\centerline{$(s_f, [v], d)$, where $s \in S_f$}

\begin{lemma}
The constructed region graph will accept exactly the set of words equivalent to the words accepted by the corresponding timed automaton.
\end{lemma}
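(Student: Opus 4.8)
The plan is to show the two inclusions between the timed language $L$ accepted by the timed automaton (with the $x := c_x - x$ update and reset) and the untimed language $L_R$ accepted by the region graph, where words are identified up to the usual untiming that replaces each clock valuation by its region. The statement is really the standard Alur--Dill correspondence adapted to our new update, so the skeleton is: (i) a run of the timed automaton maps to a path in the region graph accepting the corresponding untimed word, and (ii) conversely every accepting path in the region graph can be instantiated by a concrete timed run. Throughout I would keep the rate-of-change map $d$ folded into the location, as done in the previous lemma, so that only the clock-valuation component needs region reasoning.

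For the forward direction, I would take an accepting run $(s_0, v_0) \to \cdots \to (s_n, v_n)$ of the timed automaton, with $(s_0,v_0)$ initial and $s_n \in S_f$. Each timed move is a delay followed by an action transition $\langle s, a, \phi, \lambda, \mu, s'\rangle$. I would map it step by step to $(s_0, [v_0], d_0) \to \cdots \to (s_n, [v_n], d_n)$: the previous lemma (``for every timed action there is a corresponding transition in the region graph'') gives exactly that each timed move induces a region-graph transition with the same label, so the image is a genuine path; it starts at an initial region state because $v_0(x)=0$ and $d_0(x)=1$ for all $x$, and ends at a final state because $s_n \in S_f$. Hence the untimed word is accepted by the region graph, giving $L \subseteq L_R$ after untiming.

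For the converse, I would proceed by induction on the length of an accepting path $(s_0,[v_0],d_0) \to \cdots \to (s_n,[v_n],d_n)$ in the region graph, building a concrete run of the timed automaton realising the same sequence of labels. Starting from the concrete initial state $(s_0, v_0)$ with $v_0 \equiv 0$ (which lies in $[v_0]$), at each step I use properties (2)--(5) of the penultimate lemma: given that the current concrete valuation $\hat v$ is in the region of the path's current state, property (2) supplies a delay $t$ so that $\hat v + t$ lands in the region required by the guard, property (3) guarantees the guard $\phi$ is actually satisfied by $\hat v + t$, and properties (4) and (5) guarantee that applying $\lambda := 0$ and $\mu := c_\mu - \mu$ keeps the resulting valuation inside the region of the next state. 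Iterating yields a timed run with the right labels ending in $s_n \in S_f$, so the path's untimed word is in $L$, giving $L_R \subseteq L$.

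The main obstacle is the converse direction, and specifically the delay step: I must ensure that the time-successor region actually reachable from $[\hat v]$ matches the one demanded by the path, which is where the restriction to two clocks is essential --- this is exactly the content of property (2) and the reason Lemma~\ref{lemma:threeclockissue} blocks the argument for three or more clocks. A secondary point worth spelling out is bookkeeping for the direction map $d$ and the conversion of a flip into a reset when $v(x) \ge c_x$ (Section~\ref{sec:flip}): since $d$ is absorbed into the location, one only needs to check that the region-graph transitions were defined to perform that substitution consistently, which the construction already does. Once these are in place the two inclusions combine to give that the region graph accepts exactly the untimed words equivalent to the timed words accepted by the timed automaton, and since the region graph is finite the language-emptiness problem is decidable for two clocks or fewer.
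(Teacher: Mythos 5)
Your proposal is correct and takes essentially the same route as the paper, whose ``proof'' is simply a one-line pointer to the standard Alur--Dill argument~\cite{ad94}; your two-inclusion bisimulation argument---forward projection of a concrete run via the preceding lemma, and backward realization of a region-graph path by induction using properties (2)--(5), with $d$ folded into the location and flips at $c_x$ handled as resets---is exactly that standard argument spelled out for the extended update.
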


The proof for this will be the same as the proof given for the standard timed automata~\cite{ad94}.

\begin{theorem}
The language emptiness problem on hourglass automata with two or fewer clocks is decidable.
\end{theorem}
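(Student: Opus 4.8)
The plan is to assemble the machinery developed in this section into a reduction from language emptiness on hourglass automata to a reachability question on a finite graph. First I would invoke the first lemma of Section~\ref{sec:flip}: every hourglass automaton can be translated into an ordinary timed automaton whose only non-standard clock operation is the update $x := c_x - x$ together with the usual reset $x := 0$, with the rate-of-change map $d$ absorbed into the (still finite) location set. This step is purely syntactic and introduces no states beyond the $4^{|X|}$ multiplier already noted, so it suffices to prove decidability of language emptiness for timed automata of this restricted form with at most two clocks.

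Next I would appeal to the region construction. The relation $\cong$ equipped with the four constraints is an equivalence relation, as shown above, and the number of its classes is bounded by the finite quantity $\prod_{x \in X}(2\cdot(c_x+1)) \cdot |X|! \cdot 2^{|X|-1} \cdot (|X|+1)^{|X|} \cdot (|X|+1)^{|X|}$. Taking the cross product of these regions with the finite location set and the finite rate-of-change map $D$ yields the region graph, which therefore has finitely many nodes. The compatibility lemma --- that $v_1 \cong v_2$ is preserved under integer delays, under arbitrary real delays (for two clocks), under clock constraints, under resets, and under the update $x := c_x - x$ --- shows that every timed move of the automaton is mirrored by an edge of the region graph, and the lemma that the region graph accepts exactly the untimed words equivalent to the timed words accepted by the automaton (whose proof follows that of \cite{ad94}) closes the correspondence in both directions. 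Hence the timed language is non-empty if and only if some final node of the region graph is reachable from some initial node, which is decidable by a finite graph search.

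The only place where the argument genuinely uses the hypothesis of two or fewer clocks is the delay clause of the compatibility lemma, namely $v_1 \cong v_2 \implies \forall t_1 \in \mathbb{R}^+\, \exists t_2 \in \mathbb{R}^+\, [v_1 + t_1 \cong v_2 + t_2]$: Lemma~\ref{lemma:threeclockissue} exhibits a three-clock pair for which no such $t_2$ exists, so with three or more clocks the region graph would fail to simulate the passage of time and the reduction would collapse. I therefore expect the real content of the theorem to already be discharged by the two-clock delay analysis (the four region types and the cyclic transitions $1 \to 2 \to 3 \to 4 \to 1$), with the theorem itself reduced to citing that chain of lemmas; the main obstacle, such as it is, is making sure no case in that delay analysis has been overlooked.
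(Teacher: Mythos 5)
Your proposal is correct and follows essentially the same approach the paper takes: the theorem in the paper is stated without a separate proof precisely because it is the assembly of the preceding lemmas (flip-expressibility via $x := c_x - x$, finiteness of the region partition, compatibility of $\cong$ with delays, guards, resets, and the flip update, and bisimulation with the region graph), exactly as you describe. You also correctly isolate the delay clause as the sole place where the two-clock restriction is load-bearing, which matches the paper's reliance on Lemma~\ref{lemma:threeclockissue}.
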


%\section{Zones}
%Difference bound matrices\\
%Sum bound matrices\\
%Not as interesting if we can only have two clocks.

%\begin{enumerate}
%\item $\forall x \in X, \lfloor v(x) \rfloor = \lfloor v(x) \rfloor$
%\item $\forall x, y \in X, fr(v(x)) = fr(v(y))$
%\item $\forall x \in X, fr(v(x)) = 0$
%\item
%\end{enumerate}

\section{Stopping Time}
\begin{lemma}
With two hourglass clocks, the stopping of time does not cause decidability issues.
\end{lemma}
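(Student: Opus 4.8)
The plan is to extend the region graph of the previous section so that it also records which clocks are stopped, and then to show that with two clocks every delay transition --- now possibly freezing one or both clocks --- still has a matching transition in the region graph; the rest of the construction and the decidability argument then carry over unchanged.

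First I would observe that toggling a clock, and flipping a clock that is currently stopped, change only the direction map $d$ (now valued in $\{-1,-0,0,1\}^{X}$, which has $4^{|X|}$ elements and so is still finite) and leave the clock valuation untouched. Hence these operations preserve $\cong$ trivially, in the same way the $d$-update accompanying an ordinary flip does, and their interactions with the $v(x) \ge c_x$ case are resolved exactly as in Section~\ref{sec:flip}. Consequently the only genuinely new ingredient is the delay transition, in which only the running clocks advance while the stopped clocks stay fixed.

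With $X = \{x,y\}$ there are three cases for a delay. If both clocks run, this is precisely the two-clock time-elapse property established above. If both clocks are stopped, a delay changes nothing and the claim is immediate. The one case needing work is when exactly one clock, say $y$, is frozen: given $v_1 \cong v_2$ and $t_1 \ge 0$, I must produce $t_2 \ge 0$ so that the valuation obtained from $v_1$ by adding $t_1$ to $x$ only is $\cong$ to the valuation obtained from $v_2$ by adding $t_2$ to $x$ only. Since $\lfloor v(y) \rfloor$, $fr(v(y))$ and ``$fr(v(y)) = 0$?'' do not move, the region can change, as $x$ increases, only when $fr(v(x))$ meets one of the thresholds $0$, $fr(v(y))$, $1 - fr(v(y))$ or $\frac{1}{2}$ --- the last coming from constraint~4 applied with both arguments equal to $x$ --- or when $v(x)$ passes $c_x$. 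Now $v_1 \cong v_2$ already fixes $\lfloor v(x) \rfloor$, whether $fr(v(x)) = 0$, the order of $fr(v(x))$ against $fr(v(y))$ (constraint~2), the position of $fr(v(x))$ relative to $1 - fr(v(y))$ (constraint~4), and whether each of $fr(v(x))$, $fr(v(y))$ is below, equal to, or above $\frac{1}{2}$ (constraint~4 with a repeated argument); together with the fixed constant $c_x$, this pins down the complete weak ordering of all the relevant thresholds and the position of $fr(v(x))$ within it. Therefore, as $x$ advances, $v_1$ and $v_2$ visit the same sequence of regions, and choosing $t_2$ so that $v_2$ reaches the same member of that sequence as $v_1 + t_1$ does --- and, for the endpoints, choosing $t_2$ to land exactly on $c_x$ when $v_1(x) + t_1 = c_x$ and to overshoot $c_x$ when $v_1(x) + t_1 > c_x$ --- yields the required $t_2$.

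With the three cases in hand, the remainder is as before: the region graph is finite; each timed action --- a delay with an arbitrary frozen subset of the two clocks, followed by an action transition whose guard, reset, $c_x - x$ update, flips and toggles are all known to respect $\cong$ --- induces a region-graph transition; and the standard Alur--Dill argument then shows the region graph accepts exactly the set of untimed words equivalent to the timed words it accepts, so language emptiness stays decidable. I expect the one-frozen-clock delay to be the main obstacle, and it is genuinely two-clock-specific: freezing one of three clocks still leaves two clocks in motion, so the time-elapse obstruction behind Lemma~\ref{lemma:threeclockissue} would reappear.
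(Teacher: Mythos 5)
Your proposal takes essentially the same route as the paper. The paper's proof also reduces the problem to the time-elapse property (observing that toggles only touch the finite direction map, not the valuation), and it handles the one-stopped-clock delay by introducing exactly the $\tfrac{1}{2}$ threshold on each single clock's fractional part that you identify --- the paper states this as an ``extra constraint'' and explicitly notes it is ``an extended version of condition~4, but we let clocks $x$ and $y$ to be the same clock,'' which is precisely your ``constraint~4 applied with both arguments equal to $x$.'' Where you differ is only in presentation: the paper justifies the one-frozen-clock time-elapse property by visual inspection of Figure~\ref{fig:regiontypes} (the halved region~2 and region~4 make horizontal and vertical transitions deterministic), whereas you replace that picture with a symbolic argument that $\cong$ pins down the weak ordering of the thresholds $0$, $fr(v(y))$, $1-fr(v(y))$, $\tfrac{1}{2}$, $c_x$ together with the position of $fr(v(x))$ among them, so the region sequences under pure $x$-advance coincide --- a somewhat more rigorous rendering of the same idea. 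Your explanation of why three clocks break (one frozen clock still leaves two running, reintroducing the obstruction of Lemma~\ref{lemma:threeclockissue}) is slightly coarser than the paper's, which pinpoints the specific ambiguity of whether a lower-fraction running clock crosses the frozen clock's fraction before a higher-fraction running clock hits its next integer, but the two are consistent. One small caution: you describe the $\tfrac{1}{2}$ condition as already entailed by constraint~4, but the paper's development of constraint~4 (and the region count that follows) treats the two clocks as distinct, which is why the paper presents the $\tfrac{1}{2}$ condition as a genuine refinement of $\cong$; your proof is not wrong, but you should state explicitly that you are enlarging $\cong$ with this extra requirement rather than appealing to it as already present.
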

Hourglass clocks are never compared to one another, and they are bounded to a range.
This means we only need to consider the progression of time.

In the 2-clock region diagram shown in Figure~\ref{fig:regiontypes}, we can see that a horizontal or vertical time progression can be made to work by introducing an extra constraint:\\
$\forall x \in X$ such that $v(x) \le c_x$
$fr(v(x)) \le 0.5 \iff fr(v'(x)) \le 0.5$ and $fr(v(x)) \ge 0.5 \iff fr(v'(x)) \ge 0.5$.
This cuts the second and fourth region in Figure~\ref{fig:regiontypes} into two halves.
Proving that $\cong$ remains an equivalence relation with this extra condition requires few modifications to the original ones since this new condition is an extended version of condition 4, but we let clocks $x$ and $y$ to be the same clock.

The two properties to prove are:
\begin{enumerate}
\item $v_1 \cong v_2 \land t \in \mathbb{Z}^+ \implies v_1 + t \cong v_2 + t$
\item $v_1 \cong v_2 \implies \forall t_1 \in \mathbb{R}^+ \exists t_2 \in \mathbb{R}^+ [v_1 + t_1 \cong v_2 + t_2]$
\end{enumerate}

\begin{proof}(1) This is unaffected from the proof given where no clocks are stopped.
\end{proof}

\begin{proof}(2) We can see from Figure~\ref{fig:regiontypes} that when we cut the second and fourth regions, the region transitions are fixed when we allow time to progress. Note that this does not work for three clocks since when you have one stopped clock with one of the other two clocks having a higher fractional component and the last clock having a lower fractional component than the stopped clock, you cannot determine whether the lower will cross the stopped clock or the higher will reach the next integer first.
\end{proof}

\section{Conclusion and Future Work}
In this paper, we introduced the hourglass automata and its ability to let clocks go backwards within some bounded range.
To be able to reason and study this class of automata, we first introduced an extension to the standard timed automata which has a new update, $x := c - x$, where $x$ is a clock, $c \in \mathbb{N}_{\le c_x}$, and $c_x$ is the greatest integer constant that clock $x$ is compared against, and a rate of change map $D : X \to \{-1, -0, 0, 1\}$ is added to the state space.
We then showed that this extended timed automata is capable of expressing the operations of the hourglass automata by showing a translation of the flip operation on bounded clocks.
As a result, this allowed us to examine the language emptiness problem for this class of automata.

For the hourglass automata, we limited the clock update to the form $x := c_x - x$ since this was all that was necessary. 
From there, we prove the decidability of the language emptiness problem on this class of automata when two or fewer clocks are involved, using the same approach used for proving decidability of the standard timed automata~\cite{ad94}.
This result shows that for two hourglass clocks, we can reduce the system to a finite graph.
% tells us that we can reason about reachability on systems that can be represented with two hourglass clocks.

The next step would be to investigate the possibility of having more than two clocks.
The timed automata that we currently have defined is capable of more than the hourglass automata definition, and maintains a lot of information, which is not necessary, so it is possible that we could use another model to represent the hourglass automata such that language emptiness verification is decidable with more than two clocks.
Some important notes here would be removing all unnecessary constraints that are not the boundaries, and possibly introduce clock regions which are not square with edges on the integer components. This may be possible since we don't compare to integer constants other than the bounds.

The ability for hourglasses to be placed on their side, allowing the stopping of clocks, will also be an interesting area to investigate when more than two clocks are introduced.
In general, the ability to stop time allows timed automata to become as expressive as linear hybrid automata~\cite{cl00}, thus introducing undecidability issues, but there has been other classes of timed automata that have a limited ability to stop time like the interrupt timed automata~\cite{bhs12}. The interrupt timed automata would have some overlap with the hourglass automata in what it can express.
%If we keep the two clock constraint, it is possible that it remains decidable.
%Another approach would be approximating the stopping of time.

Additionally, the decidability proof for the update given in this paper can apply for any $c \in \mathbb{N}_{\le c_x}$ as long as $c - x \ge 0$, otherwise the clock values become invalid, so this extended timed automata could be developed further.

%\bibtexref{
%@inproceedings{Henzinger:1995:WDH:225058.225162,
% author = {Henzinger, Thomas A. and Kopke, Peter W. and Puri, Anuj and Varaiya, Pravin},
% title = {What's Decidable About Hybrid Automata?},
% booktitle = {Proceedings of the Twenty-seventh Annual ACM Symposium on Theory of Computing},
% series = {STOC '95},
% year = {1995},
% isbn = {0-89791-718-9},
% location = {Las Vegas, Nevada, USA},
% pages = {373--382},
% numpages = {10},
% url = {http://doi.acm.org/10.1145/225058.225162},
% doi = {10.1145/225058.225162},
% acmid = {225162},
% publisher = {ACM},
% address = {New York, NY, USA},
%}
%}

\end{document}